\documentclass[12pt,a4paper]{article}

\usepackage[british]{babel}

\usepackage[a4paper,top=2cm,bottom=2cm,left=2.5cm,right=2.5cm,marginparwidth=1.75cm]{geometry}

\usepackage[sorting=nyt]{biblatex} 
\usepackage{graphicx} 
\usepackage[colorlinks=true, allcolors=blue]{hyperref} 
\usepackage{orcidlink,algorithm,algorithmicx,algpseudocode,listings,chngcntr,booktabs,lipsum,subcaption,authblk,diagbox}
\usepackage[title]{appendix}
\usepackage{booktabs} 
\usepackage{caption}  
\usepackage{threeparttable} 
\usepackage[T1]{fontenc}    
\usepackage{csquotes}       
\usepackage[utf8]{inputenc}

\usepackage{amsmath}
\usepackage{amstext}
\usepackage{amsbsy}
\usepackage{amsthm}
\usepackage{amsfonts}
\usepackage{amssymb}
\usepackage{amscd}
\usepackage{eucal}
\usepackage{enumerate}
\usepackage{units}
\usepackage{centernot}
\usepackage{verbatim}
\usepackage{xcolor}
\usepackage{mathtools}
\usepackage{bbm}
\usepackage{mathrsfs}
\usepackage{mdwlist}
\usepackage{geometry}

\usepackage{bm}

\theoremstyle{plain}
\newtheorem{thm}{Theorem}[section]
\newtheorem*{thm*}{Theorem}
\newtheorem{lem}[thm]{Lemma}
\newtheorem*{lem*}{Lemma}
\newtheorem{cor}[thm]{Corollary}
\newtheorem{prop}[thm]{Proposition}
\newtheorem*{prop*}{Proposition}
\theoremstyle{definition}
\newtheorem{defn}[thm]{Definition}
\newtheorem*{defn*}{Definition}

\def\R{{\mathbb R}}
\def\footnoterule{\hrule \kern2.6pt}

\def\0T{[\,0,T]}

\newcommand{\N}{\mathbb{N}}

\newcommand{\norm}[1] {\left\| #1 \right\| }

\usepackage{setspace}
\usepackage{titlesec}
\titleformat{\section} 
  {\normalfont\Large\bfseries}{\thesection.}{1em}{}
  
\usepackage{float}   
\usepackage{caption} 


\title{Existence and Optimality of Envy-Free random allocations}

\author[1]{Anna Vakarova}

\affil[1]{\small UC Berkeley}

\date{}  

\begin{document}
\maketitle

\begin{abstract}
I provide a unified framework to establish the existence of a weak Pareto efficient, envy-free allocation in general settings: random allocations are probability measures on a compact metric space, and preferences of agents are represented by continuous, concave utility function on the space of probability measures.

The generality of my setting nests the existence results for small spaces with indivisibles - the list of prominent applications includes the school assignment problem and the house allocation problem.

The technique developed to prove the existence also applies to allocation problems with divisibles, like fair cake-cutting or land-division problems. Here I also show that even when agents' preferences are not atomless, the allocation in question can be represented as a probability measure over partitions with finite support.

Last but not least, I apply the existence result to new allocation problems that no existing framework encompasses. These include allocation of indivisible goods or services over time and allocation of differentiated goods.

\end{abstract}

\textbf{Keywords}: fair division, cake-cutting problem

\textbf{JEL classification}: D30, D63. 

\section{Introduction}

The literature on the notion fair and efficient allocation is vast and dates back to Foley \cite{Foley1967}. The questions of the existence, approximate solutions, mechanisms, and algorithms to construct one have been studied in different environments, both by economists and mathematicians.

In this diverse literature, different model primitives call for different proof strategies. For example, in a standard Arrow-Debreu setting, under mild assumptions about agents' preferences, Walrasian equilibrium with equal incomes is known to be PE and EF. Therefore, the exercise of establishing the existence of a fair allocation is equivalent to establishing the existence of Walrasian equilibrium. However, this approach fails in the setting of Hylland and Zeckhauser, referred to as HZ throughout this paper \cite{HyllandZeckhauser1979}. Due to the unit-demand constraint, Walrasian equilibrium may not exist. To prove existence of a fair allocation, HZ employ a different strategy - proving the existence of a pseudo-market equilibrium that is PE and envy-free.


In some important applications though, the unit demand constraint can be too restrictive. When generalizing unit demand setting to, say, multi-unit demand setting, the literature faces a challenge in applying the concept of the pseudo-market equilirium, namely, the failure of the Birkoff-von Neumann theorem. The theory of existence of a pseudo-market equilibrium relies on particular features of the setting: every agent gets their marginal lottery, and jointly the lotteries agents get must be feasible. There is no guarantee though that there is a way to distribute the goods that respects both the demand constraints and does not distort agents' marginal lotteries. The application frontier of the approach that relies on the BvN theorem is beautifully characterized by Budish et al (2013) \cite{BudishEtAl2013}.


This paper abstracts from the types of constraints that might be imposed on agents' demands, and even more generally, on the set of feasible allocations, and attempts to provide a unified framework for which the existence result is proved. This framework nests settings commonly found in both economics and mathematics literature.

The primitive of the setting of this paper that allows me to nest other settings is general deterministic consumption spaces. I make this definition precise in Section 3. Other important features of the literature, like finitely many agents and the assumption that utilities are non-transferable, are preserved. The random allocations in the paper are probability measures, with non-convexity of the deterministic consumption space as one potential justification.

In Section 3, I also present different settings that can be found in the literature as special cases of a general deterministic consumption space. In particular, I am able to accomodate different multi-unit demand settings, as described in \cite{MirallesPycia2021} or \cite{BudishEtAl2013}, or the setting where there is pre-specified family of partitions that defines the way the goods can be allocated (\cite{ColeTao2019}). 

Another example of a general deterministic consumption space I am able to accomodate comes from the literature on fair division in mathematics, which studies the problem of allocating a "continuous" commodity. Husseinov and Sagara \cite{HusseinovSagara2013} characterize consumption spaces as "intrinsically infinite-dimensional." Berliant \cite{Berliant1985} also makes a strong case for modeling land explicitly as a subset of the plane rather than fitting it into a finite-dimensional space. In these models, the consumption set of agents usually consists of partitions of a measurable space, which by means of simple measurable functions can be embedded in an infinite-dimensional vector space.

Working with general deterministic consumption spaces goes beyond nesting applications for which the existence result is already established. A new application studies allocation problems of goods or services over time, such as airline time-slot scheduling. Here agents must engage in long-horizon planning over sequences of interdependent decisions rather than static bundles. Moreover, an airline can face a number of involved constraints, such as capacity constraints, time and network constraints (since flights must form feasible itineraries that respect aircraft turnaround times) maintenance requirements, and connectivity across routes. To the best of my knowledge, the existing frameworks are not sufficiently rich to capture the structure of this problem.

Another new application is the problem of allocating heterogeneous, indivisible delivery orders to drivers in a stochastic platform environment. In Section 5, I claim that the model of differentiated goods a-la MasCollel \cite{MasColell1975} captures the main features of the application. Agents choose from a continuum of characteristics of commodities, but each commodity must be consumed in integral amounts.

Both applications fall into the realm of fair division. Time slot allocation in airports is a natural candidate for fairness considerations because airport infrastructure operates as a regulated natural monopoly, where capacity is scarce, access is centrally mediated, and distributional concerns are explicitly recognized in institutional design. By contrast, platform-based order allocation to drivers is also shaped by fairness concerns because subjective perceptions of unequal access to lucrative assignments can directly affect participation, effort provision, and the stability of the labor supply.

To summarize, my contribution is threefold. I establish the existence of weak Pareto efficient and envy-free allocations in the environments with general allocation spaces. It allows me to apply the existence result to new applications that were not studied before and are not directly accomodated by the existing models. The existence result also applies to common problems in both econ and math literature, like land division problem. Here I show that fair and efficient allocation can be chosen to have finite support.




Proving the existence of a fair allocation is essentially an application of a fixed point theorem or equivalent, therefore a tight connection to an equilibrium with equal incomes is natural and commonly found in the literature. Differently from that, the meat of the approach of this paper that allows to establish the result at this level of generality is Sperner's lemma (or equivalently, KKM lemma). The proof closely follows that of Echenique, Miralles, and Zhang \cite{EcheniqueEtAl}. Not relying directly on the existence of equilibrium arguments is what allows me to circumvent failures of the Birkoff-von-Neuman theorem. The proof is structured as follows: I start by characterizing all wPE allocations via Pareto weights. To find the correct vector of weights for envy-freeness to hold, I simplicially subdivide the simplex of weights and color the subsimplexes using a version of the Varian lemma that finds an agent in support who does not envy anyone. The application of Sperner's lemma and the continuity of utility functions yield the desired result.

The paper is structured as follows. Section 2 reviews both the literature on the existence of efficient and envy-free allocations in economics and the cake-cutting literature in mathematics. Sections 3 and 4 present the setup and the proof of the main theorem. Section 5 illustrates the usefullness of the existence result through several new applications. Section 6 concludes.

\section{Literature review}

The problem of reconciling fairness with efficiency in economic allocations has deep roots, but it was Hal Varian’s paper \cite{Varian1974}, that first rigorously formulated the tension between envy-freeness and Pareto efficiency in general equilibrium theory. Varian defined an allocation as envy-free if no individual prefers someone else's allocation to their own. He provided positive results for cases with divisible goods and convex preferences, noting that such allocations often exist in standard economies. However, his work also highlighted the fragility of envy-freeness when assumptions like convexity or divisibility are dropped. Since in the settings with indivisible goods, PE and EF allocations fail to exist in the simplest settings, HZ introduced the idea of randomized allocations.

In the literature on allocating indivisible goods, two main branches exist based on how preferences are represented. In ordinal models, agents have preferences over lotteries/bundles, without assuming expected utility representation. An influential example here is Bogomolnaia and Moulin \cite{BogomolnaiaMoulin2001}. They introduced a notion of ordinal efficiency and a mechanism (the Probabilistic Serial mechanism) that ensures ordinal efficiency and envy-freeness under stochastic assignments.

It is well-illustrated in the literature that working with ordinal preferences as partial information on the cardinal preferences may lead to big welfare losses. For example, Daniel Halpern and Nisarg Shah \cite{HalpernShah2021} characterize the size of a distortion with or without fairness requirements of allocation mechanisms under ordinal preferences.

Natually, much of the theoretical depth in this field lies with the cardinal preferences, where agents have vNM utility functions defined over objects, or, in other words, deterministic allocations, allowing the discussion of cardinal efficiency. 

The canonical model in this area was introduced by Hylland and Zeckhauser \cite{HyllandZeckhauser1979}. They studied the problem of allocating indivisible goods (like seats in classes) under unit demand constraints, meaning each agent can have at most one item. Their key idea was to allow randomized assignments (i.e., lotteries over discrete allocations) and treat them as divisible probabilistic shares of the goods. Using this probabilistic representation, they showed that there exists an allocation that is both ex-ante Pareto efficient and ex-ante envy-free. Note that the unit demand constraint in HZ is crucial, as it allows to show that the set of feasible allocations can be represented as a lottery over the set of deterministic/discrete allocations via the Birkhoff-von-Neuman theorem.

“Designing Random Allocation Mechanisms: Theory and Applications” by Budish, Che, Kojima, and Milgrom \cite{BudishEtAl2013} tackles the incompatibility of efficiency, incentive compatibility, and envy-freeness in settings with indivisible goods. They build upon the HZ model but shift the focus to large markets. Their approximate Competitive Equilibrium from Equal Incomes (A-CEEI) mechanism finds allocations that are approximately EF, PE, and IC under quasi-linear, additive utility functions over bundles of indivisible goods. As the market becomes large, the paper proves that A-CEEI allocations converge to the exact competitive equilibrium outcomes of the divisible goods economy.

This paper is inspired by "Fairness and Efficiency for Allocations with Participation Constraints" by Echenique, Miralles, and Zhang \cite{EcheniqueEtAl}. Their paper extends the HZ framework to settings with participation constraints, where agents cannot be worse off by participating than by opting out. The authors show that under general conditions, there exists a random assignment that is Pareto efficient, justified envy-free, and satisfies the participation constraints. Under different assumptions on the utility functions, their proofs are existence of equilibrium proof, with price-dependent income, and search for the right vector of Pareto weights proof, using the KKM lemma.

Cole and Tao \cite{ColeTao2019} also prove the existence of exact envy-free and Pareto efficient allocations. They relax unit demand constraints and work with the set of feasible allocations that is the set of permutations on the set of partitions of finitely many items.

The problem of finding a fair allocation in particular settings has been studied not only by economists but also by mathematicians. The famous literature on cake cutting or land division traces back to Steinhaus \cite{Steinhaus1948}, who formalized the problem known as the cake-cutting problem, starting the literature on fair division. But this particular definition of fairness, that is, envy-freeness and Pareto efficiency, has not found its way into the literature until Weller \cite{Weller1985}. He proved the existence of an EF and PE allocation in the problem of allocating a measurable space $(\Omega, \mathcal{B})$, where agents' preferences are atomless measures on this space. He also showed that such an allocation is supported as a price equilibrium in the exchange economy with equal incomes.

Akin \cite{Akin1995} builds on Weller and extends the result to the settings where measures are not necessarily atomless. His notion of envy-freeness is also more general, allowing for assigning non-equal weights to agents' evaluations.

Weller's allocation is a partition, since he assumes atomless measures. Akin's allocation is a partition of unity, since he does not. Both of them give a way to construct such an allocation, though: it is supported as a price equilibrium for certain prices/budgets. My contribution to the literature is that even when measures are not atomless, I show that the allocation in question is a lottery over partitions.

Berliant, Thomson, and Dunz \cite{BerliantThomsonDunz1991} prove the existence of an allocation satisfying PE and the stronger notion of group envy-freeness, but the comestible in the paper is a compact subset of $\mathbb{R}^k$ and agents' utilities are absolutely continuous with respect to Lebesgue measure. They also prove that under certain assumptions, envy-freeness implies group envy-freeness and efficiency. For example, their sufficient conditions are applicable in the problem of allocating sub-intervals of $[0,1]$ as in Woodall \cite{Woodall1980}, and the existence of an EF allocation has already been proved.

Hüsseinov and Sagara \cite{HusseinovSagara2013} model a commodity as a measurable space. They relax the convexity assumption of the preferences and prove the existence of a PE and EF allocation under certain assumptions on the utility function of each individual: uniform continuity, strict monotonicity with respect to set inclusion order, and closed partition matrix range of the utility functions.

\section{Set Up}\label{sec3}
There are finitely many agents in the allocation problem, indexed by $j \in J$. The consumption space of a single agent is a metrizable space $(Y,d_Y)$. Without loss, I assume it is identical for all agents. 

\subsubsection*{The space of feasible random allocations}
The set of feasible deterministic allocations is denoted by $X$ and it is a subset of $Y^N$. $\mathcal{B}_X$ denotes the sigma-algebra of Borel measurable sets, and Borel space $(X,\mathcal{B}_X)$ is taken as a primitive of the problem.

The generality of the existence result is primarily driven by the generality of the set $X$. Unlike the existing papers in the fair division literature in economics, I do not assume it is finite (a set of extreme points of a convex polytope), or a subset of a finite-dimensional vector space. Also, my existence result does not exploit its geometric structure.

I do impose two assumptions on $X$. To this end, I define a \textbf{permutation operator} $T_{i \leftrightarrow j}$ on $X$ that maps $y = (y_1, \dots, y_i, \dots, y_j, \dots, y_J)$ into $y_{i \leftrightarrow j} = (y_1, \dots, \mathbf{y_j}, \dots, \mathbf{y_i}, \dots, y_J)$. Note that the permutation operator is continuous in the product topology. Moreover, for any $i,j$, $T_{i \leftrightarrow j}$ is an involution, hence a homeomorphism.

The set $X$ is assumed to be:
\begin{itemize}
    \item compact
    \item \textbf{permutation-invariant}: $T_{i \leftrightarrow j}(X) \subset X \ \text{for all} \ i,j$
\end{itemize}

Permutation invariance is also known as invariance under re-labeling of agents. Note that when reasoning about fairness, it ensures "the equality of opportunity" - no agent is exogenously discriminated against. 

As $X$ is permutation-invariant, $\pi_i(X) = \pi_j(X) \ \forall i,j \in J$, where $\pi:X \rightarrow Y$ is a natural projection map. WLOG, assume that $\pi_i(X) = Y$.

Before proceeding, let me discuss the assumptions of the setting. For the existence result, the compactness of the space of deterministic allocations $X$ is necessary, since the proof relies on the argmax characterization of weak Pareto efficient allocations. Note that if the underlying space is a metrizable topological space, the underlying space is compact iff the space of probability measures on it is compact in weak-* topology. Moreover, to support a technical point of the proof, I require the space of probability measures to be metrizable. A metrizable underlying topological space is compact iff the space of probability measures is compact and metrizable.



The permutation-invariance requirement of the set $X$ is conceptual and NOT without loss of generality. It allows me to use envy-freeness as a criterion of fairness and plays a fundamental role in the proof. Note that if one wants to reason about fairness in more complex settings (for example, the settings with endowments), one way to do so is to restrict access of individuals to certain allocations via agent-specific feasibility constraint, which is not allowed here.

The set of random allocations is denoted by $\Delta(X)$ and it is the set of probability measures on $X$ endowed with weak-* topology. \footnote{For an excellent reference on how this topology is defined, see Aliprantis and Border, 1999, Chapter 15.} Denote an element of this set by $p \in \Delta(X)$. 

Every random allocation $p \in \Delta(X)$, generates a family of marginal distributions denoted by $(p^1, \dots, p^J)$. $p^j$ is a probability measure on $Y$, and for every Borel measurable subset $B \subset Y$, $p^j(B) = p(\pi_j^{-1}(B))$.

\subsubsection*{Preferences}

Agents’ preferences exhibit no externalities. Agent $j$ consumes their marginal distribution allocation $p^j$.

The preferences of an agent $j$ over random allocations are assumed to be represented by a continuous, \textbf{concave} real-valued utility function $U_j: \Delta(Y) \rightarrow \mathbb{R}$. The notation $U_j(p)$ for a random allocation $p$ means nothing but $U_j(p^j)$ for a corresponding marginal $p^j$.

In particular, the preferences described above nest the Expected Utility representations. The preferences are then represented by a continuous VNM utility index $u^j: X \rightarrow \R$.
In this case, I also assume that $u^j$ depends only on the $j$th coordinate of $X$.
    \[
        U_j(p) = \int_{Y} u^j(y) dp^j
    \]

Another class of preferences that satisfy the assumptions above is maxmin preferences under risk. Imagine a conservative decision-maker who is unsure about state-dependent payoffs. For a closed, convex set of VNM utility indeces $\{u^j: Y \rightarrow \R\}$ denoted by $\mathcal{U}^j$
    \[
        U_j(p^j) = \min_{u^j \in \mathcal{U}^j} \int_{Y} u^j(y) dp^j
    \]

\subsection{Examples of the underlying space $X$}

To highlight the generality of the underlying space of deterministic allocations $X$, let me discuss examples from the literature that are special cases of the setting of this model.

\textit{Example 1}. In the classical setting of the literature (HZ), where the problem is to allocate $N$ discrete items, the set $X$ is the set of all permutation matrices $N \times N$.

\textit{Example 2}. In the setting with multi-unit demand a-la \cite{MirallesPycia2021}, the space of allocations is defined as follows.
There are indivisible objects $w, v \in \Omega = \{1, \ldots, |\Omega|\}$ to be assigned to finitely many agents. Each object $w$ is represented by a number of identical copies $|w| \in \mathbb{N}$. Let
    \[
    S = (|w|)_{w \in \Omega}
    \]
denote the total supply of object copies in the economy. Each agent demands at most $k \in \{1, 2, \ldots\}$ units of various goods in total. The space of deterministic consumptions of an agent $Y$ is a subset of $\{0,1,\ldots,k\}^{\Omega}$. For example, in a course allocation problem, the set  $Y = \{ b \in \{0,1\}^\Omega : \sum_{w \in \Omega} b^w \leq k\}$ for each agent $i$. The space of feasible allocations $X$ is cut out from $\times_{j \in J} Y$ by the feasibility constraint $\sum_{j \in J} y_j \leq S$.

\textit{Example 3}. In \cite{ColeTao2019} there are $m$ items to be allocated to $n$ agents, and $T$ feasible partitions of the set of items $\{D_t\}_{t \in T}$ to do so. A partition $D_t = (D_{1t},...,D_{tn})$ is a vector of subsets of $\{1, \dots, m\}$ such that $D_{ta} \cap D_{tb} = \emptyset$ for all agents $a \neq b$ and $\cup_a D_{ta} \subset \{1,2,...,m\}$. Each partition can be distributed to the players in any way, so long as each agent gets a distinct single element of the partition.
To map this setting into the primitives of this paper, let $Y = 2^{\{1, \dots, m\}}$. $X$ is equal to the set of all permutations of feasible partitions.

\textit{Example 4}. Consumption of differentiated goods a-la \cite{MasColell1975}.
MasCollel describes an exchange economy in which differentiated commodities are available in integral amounts (that is, not perfectly divisible) and the variety of commodities comes from the richness of characteristics. Let $(K,d)$ be a compact metric space, called the space of commodity characteristics. The space of individual commodity bundles is the space of non-negative, bounded Borel measures that are integer valued. The consumption set $Y$ is the set of all individual commodity bundles such that $a(K) \leq \alpha$, where $\alpha$ is a large positive integer. 


Let $\nu$ be a total endowment measure of the economy such that $supp(\nu) = K$. The space of deterministic allocations is cut out of Cartesian product by a standard feasibility constraint:
\[
    X = \{(a_1, \dots, a_n) \ \Big| \ a_j \in Y, \sum_{j \in J} a_j \leq \nu\}
\]

\textit{(Counter) Example 5}. The permutation-invariance assumption on the set $X$ is NOT without loss of generality. Consider a setting from \cite{PaznerSchmeidler1974}, which features two examples of market economies in which Pareto efficient and envy free allocation fails to exist.\footnote{I am grateful to 
Shiran Rachmilevitch for the reference.} There are two consumers $J = \{1,2\}$ and two goods to consume, leisure $l$ and consumption $z$. The set of deterministic consumptions available to consumer $j$ is $Y = \{(l_j,z_j) \in \R^2_+ \ \Big| \ 0 \leq l_j \leq 1\}$. The set $X$ is cut out from $Y^2$ by a linear technological constraint $z_1 + z_2 - (1 - l_1) - \frac{1}{10} (1 - l_2) \leq 0$. That is,
\[
    X = \{(l_1,z_1),(l_2,z_2) \in \R^4_+ \ \Big| \ 0 \leq l_j \leq 1 \ \text{and} \ z_1 + z_2 - (1 - l_1) - \frac{1}{10} (1 - l_2) \leq 0\}
\]

Note that the set $X$ does not satisfy the anonimity requirement: even though the vector $(1,\frac{5}{10}), (0,\frac{5}{10}) \in X$, the vector $(0,\frac{5}{10}), (1,\frac{5}{10})$ is not in $X$.

\subsection{Definitions}\label{subsec1}

\begin{defn}[weak Pareto efficiency]
    Allocation $p$ is \textbf{weak Pareto efficient}(wPE) if there is no allocation $p^\prime$ such that $U_j(p^\prime) > U_j(p)$ for all $j \in J$ (no strong Pareto improvements).
    \end{defn}

\begin{thm} \label{thm:wpe}
    Allocation $p$ is weak Pareto efficient if and only if it maximizes $\sum_{j = 1}^n \lambda_j U_j(p)$ for a nonnegative vector $(\lambda_j)_{j \in J} \in \R^N_+$.
\end{thm}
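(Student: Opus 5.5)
The plan is the standard supporting-hyperplane argument applied to the utility possibility set. Throughout, the weight vector must be read as nonzero, $\lambda \in \R^N_+\setminus\{0\}$. With $\lambda = 0$ every allocation trivially maximizes $\sum_j \lambda_j U_j$, and the ``if'' direction would fail. I will state this explicitly at the start.

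First I record the structural facts about $p \mapsto U_j(p) := U_j(p^j)$ on $\Delta(X)$. The marginal map $p \mapsto p^j = p\circ \pi_j^{-1}$ is affine, since pushforward is linear in the measure. It is also weak-* continuous, because $\pi_j$ is continuous: for bounded continuous $f$ on $Y$, $\int f\,dp^j = \int f\circ\pi_j\,dp$. Composing with the continuous concave $U_j$ makes $p\mapsto U_j(p)$ continuous and concave on the convex set $\Delta(X)$. Finally, $\Delta(X)$ is weak-* compact because $X$ is compact metrizable.

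For the ``if'' direction, suppose $p$ maximizes $\sum_j \lambda_j U_j$ with $\lambda\ge 0$, $\lambda\neq 0$. If some $p'$ had $U_j(p')>U_j(p)$ for all $j$, then, since at least one $\lambda_j>0$, we would get $\sum_j\lambda_j U_j(p')>\sum_j\lambda_j U_j(p)$, a contradiction. So $p$ is wPE.

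For the ``only if'' direction, let $u(p)=(U_j(p))_{j}$ and define
\[
\mathcal{U}=\{v\in\R^N : v\le u(q)\ \text{for some } q\in\Delta(X)\}.
\]
By concavity of each $U_j$ and convexity of $\Delta(X)$, $\mathcal{U}$ is convex: if $v\le u(q)$ and $v'\le u(q')$, then $tv+(1-t)v'\le u(tq+(1-t)q')$. If $p$ is wPE, the open convex set $u(p)+\operatorname{int}\R^N_+$ is disjoint from $\mathcal{U}$. Indeed, a point $w\gg u(p)$ with $w\le u(q)$ would make $q$ a strong Pareto improvement.

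The separating hyperplane theorem in $\R^N$ then gives $\lambda\neq 0$ with $\lambda\cdot v\le \lambda\cdot w$ for all $v\in\mathcal{U}$ and all $w\gg u(p)$. Since $\mathcal{U}$ is unbounded below in every coordinate, $\lambda\ge 0$; otherwise sending a coordinate of $v$ to $-\infty$ breaks the inequality. Letting $w\downarrow u(p)$ yields $\lambda\cdot u(q)\le\lambda\cdot u(p)$ for all $q$, so $p$ maximizes $\sum_j\lambda_jU_j$.

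Continuity and compactness are not needed for the characterization itself. They do guarantee that maximizers exist for every $\lambda$, which is what makes the argmax characterization usable later. The only delicate point is the convexity of $\mathcal{U}$, which is exactly where concavity of $U_j$ and affinity of the marginal map are used. I expect that step, together with the nonzero-$\lambda$ caveat, to be the main thing to state carefully.
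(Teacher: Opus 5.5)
Your proposal is correct and follows the paper's route. Concavity of each $U_j$, composed with the affine marginal map, makes the comprehensive utility set $\{v : v\le u(q)\text{ for some } q\}$ convex, and the wPE point $u(p)$ is then supported by a hyperplane whose normal is the weight vector. Two of your additions fill in details the paper's sketch leaves implicit: you derive $\lambda\ge 0$ from the set being unbounded below, and you note that $\lambda$ must be nonzero for the ``if'' direction, since the paper's $\R^N_+$ formally allows $\lambda=0$ (it later restricts to the simplex).
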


\begin{proof}
    The "if" direction is trivial. For the "only if" direction, observe that the set $\{(U_1(p), \dots, U_N(p)): p \in \Delta(X)\} \subset \mathbb{R}^N$ is an image of a compact set under a continuous map, hence closed. If utilities are assumed to be linear, the image of utilities is also covex. If utilities are concave, the "hypograph" of $p \rightarrow U(p)$, that is, the set $\{v \in \mathbb{R}^N \ \Big| \ \text{there is} \ p: v \leq U(p)\}$ is convex and closed. If $p$ is weakly Pareto efficient, the corresponding vector of utilities is on the boundary of the image/hypograph, and the supporting hyperplane theorem establishes the result.
\end{proof}

Take a set $B \in \mathcal{B}$. Define $B_{j \xleftrightarrow[]{} k} = \{x_{j \xleftrightarrow[]{} k} \ \text{such that} \ x \in B \}$ as the image of $B$ under the permutation operator $T_{j \xleftrightarrow[]{} k}: X \rightarrow X$. Observe that for every $j,k$ the permutation operator $T_{j \xleftrightarrow[]{}k} $ is a homeomorphism, hence both the operator and its inverse are measurable.

\begin{defn}  
    An \textbf{allocation swap} is a new probability measure on $X$ defined as a push-forward measure of $p$ under the permutation operator $T_{j \xleftrightarrow[]{} k}$: $p_{j \xleftrightarrow[]{} k} (B) = p(T^{-1}_{j \xleftrightarrow[]{} k} (B)) = p(B_{j \xleftrightarrow[]{} k})$.
\end{defn}

Allocation swap can be defined for any $p \in \Delta(X)$, giving rise to a family of exchange operators $L_{j \xleftrightarrow[]{} k}: \Delta(X) \rightarrow \Delta(X)$ defined via $L_{j \xleftrightarrow[]{} k}(p) = p_{j \xleftrightarrow[]{} k} = p \circ T_{j \xleftrightarrow[]{} k}$. 

The marginals of $p$ and $p_{j \leftrightarrow k}$ are naturally related:
\[
    p^j_{j \leftrightarrow k}(B^j) = p_{j \leftrightarrow k}(\pi_j^{-1}(B^j)) = p(\pi_j^{-1}(B^j)_{j \leftrightarrow k}) = p(\pi_k^{-1}(B^j)) = p^k(B^j)
\]

\begin{defn}[Envy]
    Fix a random allocation $p$. Agent $j$ envies agent $k$ if $U_j(p_{j \xleftrightarrow[]{} k}) > U_j(p)$, or in other words, $U_j(p^j) > U_j(p^k)$.
\end{defn}
    

\begin{defn}[Envy-free allocation]
    A random allocation $p$ is \textbf{envy-free} if no agent envies another agent.
\end{defn}

\section{Existence result}\label{sec2}

Using theorem \ref{thm:wpe}, to characterize the set of weakly Pareto efficient allocations, define a \textbf{primal problem} $\mathbf{P(p,\lambda,U)}$ for a fixed set of utility representations and a vector of nonnegative weights $(\lambda_j)_{j \in J}$

\begin{equation}
    P(p,\lambda,U) = \sum_{j\in J} \lambda_j U_j(p)
\end{equation}

As proved in the previous section, the set of weak Pareto allocations is equal to the set of argmaximizers of the program above as the vector of weights $\lambda$ varies over the simplex.

To establish the existence of PE and EF allocation, I am adopting the approach of Echenique et al \cite{EcheniqueEtAl}. The proof relies on the existence of a converging subnet of argmaximizers. Since lower-hemicontinuity of demand correspondence is impossible to ensure in general problems, the first step in the proof is to relax weak Pareto efficiency to $\epsilon$-Pareto efficiency and introduce a new strictly concave objective function that ensures the uniqueness of a maximizer, and henceforth, continuity of the argmax selection.

For a vector of nonnegative weights $(\lambda_j)_{j \in J}$, and a strictly positive real number $\delta$, define a modified primal maximization problem $\bm{Q(p,\lambda,U,\delta)}$
\[
    Q(p, \lambda,U,\delta) = \sum_{j \in J} \lambda_j U^j(p) - \delta G(p)
\]

For the rest of the text, for the simplicity of exposition, the dependence of $Q$ on the utilities will be omitted. The reader should keep in mind that all the results hold for a fixed family of $\{U_j\}_{j \in J}$.


To ensure uniqueness, the concave objective is modified by adding a continuous, strictly concave map $-G: \Delta(X) \rightarrow \R$. The existence of a strictly convex map in metric spaces is a non-trivial exercise. Moreover, due to the nature of a crucial lemma, I also require this map to be invariant under permutations, that is, $G(p_{j \xleftrightarrow[]{} k}) = G(p)$ for any $j,k$. The next paragraph proves the existence of such a map.

Let $\Delta(\pi_j(X))$ be the space of all probability measures on $\pi_j(X)$, a compact subspace of $Y$, hence a compact metric space. Note that the index of agent $j$ is redundant since $\pi_j(X) = \pi_k(X) \ \forall k,j \in J$.

Then, $\Delta(\pi_j(X))$ is compact and metrizable. Using the result due to Herve \cite{Alfsen1971}, a compact, convex set $C$ admits a strictly convex, continuous, real-valued function iff $C$ is metrizable. Let $F:\Delta(\pi_j(X)) \rightarrow \R$ be the map.

Define $G(p) = \sum_{j \in J} F(p^j)$. The composition of a strictly increasing convex and strictly convex function is strictly convex; the easy proof of this result is omitted. Additionally, $G$ satisfies invariance under permutations $G(p) = G(p_{i \leftrightarrow j}) \ \forall i,j \in J$.

\subsection{Existence of an envy-free agent}\label{subsec2}

I start by defining the normalized space of Pareto weights, the simplex $\Delta^{n-1} = \{\lambda \in R^n_+: \sum_j \lambda_j = 1\}$.

\begin{defn}
    The \textbf{support} (carrier) of $\lambda \in \Delta^{n-1}$ is $\chi(\lambda) = \{j: \lambda_j > 0\}$.
\end{defn}

\begin{lem}[Lucky] \label{lem:lucky}
    Fix $\lambda \in \Delta^{n-1}$. Let $p(\lambda, \delta)$ be an allocation that solves $Q(\lambda, \delta)$. There exists agent $i \in \chi(\lambda)$ who is EF at $p(\lambda, \delta)$.
\end{lem}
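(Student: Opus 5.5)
The plan is a Varian-type envy-graph argument by contradiction, run on the perturbed objective $Q(\cdot,\lambda,\delta)$. Write $p=p(\lambda,\delta)$ and suppose that every agent $i\in\chi(\lambda)$ envies somebody, i.e.\ for each $i\in\chi(\lambda)$ there is $k$ with $U_i(p^k)>U_i(p^i)$. Form the directed envy graph on $J$ with an edge $i\to k$ whenever $i$ envies $k$. I will show that either a single swap or a cyclic reallocation of marginals strictly increases $Q$, contradicting that $p$ maximizes $Q(\lambda,\delta)$.

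\textbf{Step 1: an agent in the support cannot envy an agent outside it.} Suppose $i\in\chi(\lambda)$ envies some $k\notin\chi(\lambda)$, and consider the allocation swap $p_{i\leftrightarrow k}$. It lies in $\Delta(X)$ by permutation invariance of $X$. By the marginal identity after the definition of an allocation swap, agent $i$'s marginal becomes $p^k$, agent $k$'s becomes $p^i$, and all other marginals are unchanged. Hence
\[
Q(p_{i\leftrightarrow k})-Q(p)=\lambda_i\bigl(U_i(p^k)-U_i(p^i)\bigr)+\lambda_k\bigl(U_k(p^i)-U_k(p^k)\bigr)-\delta\bigl(G(p_{i\leftrightarrow k})-G(p)\bigr).
\]
The second term vanishes because $\lambda_k=0$. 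The third vanishes because $G$ is permutation invariant, which is exactly why $G$ was built as $\sum_j F(p^j)$. The first term is strictly positive since $\lambda_i>0$ and $i$ envies $k$. This contradicts optimality of $p$.

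\textbf{Step 2: a cycle inside the support gives a contradiction.} By Step 1, under our hypothesis every $i\in\chi(\lambda)$ has an out-edge to another member of $\chi(\lambda)$. Since $\chi(\lambda)$ is finite and nonempty, following these edges yields a directed cycle $i_1\to i_2\to\dots\to i_m\to i_1$ with all $i_r\in\chi(\lambda)$. Let $\sigma$ be the permutation of coordinates under which, for every $r$, agent $i_r$ receives the coordinate previously held by $i_{r+1}$ (indices mod $m$), with all other coordinates fixed. Write $T_\sigma$ as a composition of the transposition operators $T_{j\leftrightarrow k}$. Then $T_\sigma$ maps $X$ into $X$ and is a homeomorphism, so $p_\sigma=p\circ T_\sigma^{-1}$ is a well-defined element of $\Delta(X)$. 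Iterating the marginal identity gives $p_\sigma^{i_r}=p^{i_{r+1}}$ and $p_\sigma^j=p^j$ off the cycle. So $G(p_\sigma)=G(p)$, since it is the same multiset of marginals fed into $F$. Consequently
\[
Q(p_\sigma)-Q(p)=\sum_{r=1}^m\lambda_{i_r}\bigl(U_{i_r}(p^{i_{r+1}})-U_{i_r}(p^{i_r})\bigr)>0,
\]
because every $\lambda_{i_r}>0$ and every term is an envy gap. This again contradicts optimality, so some $i\in\chi(\lambda)$ envies no one.

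The only delicate point is Step 2's bookkeeping. One must verify that composing swaps implements exactly the intended cyclic reassignment of marginals, pushing forward step by step and tracking which $\pi_j^{-1}$ is mapped where. One must also verify that permutation invariance of $G$, stated only for transpositions, extends to $T_\sigma$; this follows because each transposition preserves $G$ and the marginal multiset. Note that the argument uses only that $p$ is a maximizer, not that it is unique, so strict concavity plays no role here.
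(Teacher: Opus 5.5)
Your proof is correct and follows essentially the same route as the paper. A single swap $p_{i\leftrightarrow k}$ rules out envy from an agent in $\chi(\lambda)$ toward an agent with $\lambda_k=0$, using permutation invariance of $G$; the resulting envy cycle inside $\chi(\lambda)$ is then resolved by a cyclic reallocation built from transpositions, which strictly raises $Q$. Your write-up is somewhat more careful than the paper's: it makes explicit the pushforward bookkeeping, why $G$ stays invariant under the full cyclic permutation, and that only maximality of $p$ (not uniqueness) is used.
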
 

The original lemma is due to Varian \cite{Varian1974}. Here I present a version of the proof for the setting of this paper.

\begin{proof}
    By contradiction. Assume that every agent in support envies someone.
    First, observe that an agent in support cannot envy an agent not in support. Toward contradiction, suppose agent $i$ with $\lambda_i > 0$ envies agent $j$ and $\lambda_j = 0$, that is $U_i(p^j) > U_i(p^i)$. Define a new allocation $p_{i \rightarrow j}$. Since $G(p) = G(p_{i \rightarrow j})$, $Q(p_{i \rightarrow j},\lambda,\delta) > Q(p,\lambda,\delta)$.
    Therefore, every agent belongs to a cycle of envy involving only agents in the support. Denote the cycle of length $k$ by $\{j_1, j_2, ..., j_k\}$. Define an after-exchange allocation by $p_{j_1 \xleftrightarrow[]{} \dots \xleftrightarrow[]{} j_k} = p \circ (T_{j_k \xleftrightarrow[]{} \dots \xleftrightarrow[]{} j_1})^{-1} = p \circ T^{-1}_{j_2 \xleftrightarrow[]{}j_{1}} \dots T^{-1}_{j_k \xleftrightarrow[]{}j_{k-1}}$. Since $T$ is a linear operator, it follows the laws of composition and inversion of linear operators.
    
    Since for every two agents $j_l, j_{l+1}$, agent $j_l$ envies agent $j_{l+1}$, $U^{j_l}(p) < U^{j_l}(p_{j_{l} \xleftrightarrow[]{} j_{l + 1}})$. Since agent's utility depends only on their marginal probability distribution, $U^{j_l}(p_{j_{l} \xleftrightarrow[]{} j_{l + 1}}) = U^{j_l}(p_{j_1 \xleftrightarrow[]{} \dots \xleftrightarrow[]{} j_k})$.
    
    In the objective
    \[
        \sum_{j \in J} \lambda_j U^j(p_{}) - \delta G(p)
    \]
    the weighted sum strictly increases, and since $G$ is invariant under permutations, $G(p) = G(p_{j_1 \xleftrightarrow[]{} \dots \xleftrightarrow[]{} j_k})$. Therefore, interchanging their allocations strictly increases the value of $Q(p,\lambda, \delta)$.
\end{proof}

\subsection{Simplicial subdivision}\label{subsec3}

By \ref{thm:wpe}, any allocation that maximizes the weighted sum of agents' utilities, namely $p \in \arg\max_{p \in \Delta(X)}P(p,\lambda,U)$, is wPE. The main idea of the proof is to find the "right" vector of weights, such that for the vector of these weights, there is an allocation $p^*(U)$ that is envy-free. 
In search of the envy-free vector of weights, I will employ simplicial subdivision of the simplex of weights. I choose barycentric subdivision for its well-studied properties.

For the existence theorem of this paper, the reader only needs to know two things about the subdivision:
\begin{enumerate}
    \item Barycentric subdivision of a simplex produces a collection of subsimplixes that cover the original simplex.
    \item Repeated application of the barycentric subdivision can produce a collection of subsimplexes of arbitrary fine mesh: for every $\epsilon > 0$, the diameter of the largest subsimplex can be taken to be smaller than $\epsilon$. 
\end{enumerate}

For curious readers, I briefly describe the barycentric subdivision procedure following Kim Border's book \cite{Border1985}.

Let $\Delta^{n-1} = \{(\lambda_1,\dots,\lambda_n) \in \mathbb{R}^n \Big| \lambda_j \geq 0, \sum_{j \in J}\lambda_j = 1\}$ be a standard closed $n$-simplex. It is the closed convex hull of $\{e_1,\dots,e_n\}$, where $e_j$ is the $j$th vector of the standard basis of $\mathbb{R}^n$. More generally, define a $n-1$ simplex $T$ by the collection of its vertices $T = [x_1,\dots,x_n]$:
\[
    [x_1,\dots,x_n] = \{\sum_{j = 1}^n \lambda_j x_j \ \Big| \ \lambda_j \geq 0, \sum_{j = 1}^n \lambda_j = 1\}
\]
\begin{defn}
    The \textbf{barycenter} of a general simplex $T$ is the point $\frac{1}{n}\sum_{j = 1}^n x_j$. The barycenter is denoted by $b(T)$. For example, for $\Delta^{n-1}$, $b(\Delta^{n-1})$ is the point $\frac{1}{n}\sum_{j = 1}^n e_j$.
\end{defn}

Define the partial order on the set of simplises as follows: $T_1 > T_2$ if $T_1 \neq T_2$ and $T_2$ is a face of $T_1$.

\begin{defn}
    Given a simplex $T$, \textbf{barycentric subdivision} of $T$ is a collection of simplices $[b(T_1),\dots,b(T_k)]$ such that $T \geq T_1 > \dots > T_k$.
\end{defn}

Finally, I use the important property of the barycentric subdivision. Let $\Delta \subset \R^n$ be a simplex. Define $\text{diam}(\Delta) = \max \norm{a-b}_{\R^n}$ such that $a,b \in \Delta$. Let $\Delta^\prime$ be an n-dimensional simplex that comes from the covering of $\Delta$ obtained by the barycentric subdivision. Then, the following estimation holds:
\[
    \text{diam}(\Delta^\prime) \leq \frac{n}{n+1}\text{diam}(\Delta)
\]

Therefore, by applying barycentric subdivision sufficiently often, the largest diameter can be made as small as desired.

Simplicially subdivide the original simplex of weights $\Delta^{n-1}$ using barycentric division. Let $V$ denote the collection of vertices of all the subsimplixes. A function $l:V \rightarrow \{0,1,...n\}$ is called proper labeling if 
\[
    l(v) \in \chi(v)
\]

where $\chi(v)$ is a support (carrier) of $v$ (observe that every $v$ is a convex combination of the vertices of the original $\Delta^{n-1}$ simplex).

The following corollary is used to define a proper labeling of any barycentric subdivision. It follows immediately from the Lucky lemma \ref{lem:lucky}.

\begin{cor}
    The labeling function that assigns to the vertex $v$ the index $i$ of the agent who is envy-free at $v$, is a proper labeling.
\end{cor}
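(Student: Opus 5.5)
The corollary asks two things of the rule $l$: that it is well defined at every vertex $v\in V$ of the barycentric subdivision, and that $l(v)\in\chi(v)$ for each $v$. I will check these in that order, after fixing the allocation attached to each vertex.

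\emph{Step 1: the allocation at a vertex.} Every vertex $v\in V$ lies in $\Delta^{n-1}$, so it is an admissible weight vector. I first show that the modified problem $Q(\cdot,v,\delta)$ has exactly one maximizer $p(v,\delta)$ on $\Delta(X)$.

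\emph{Existence.} $\Delta(X)$ is compact in the weak-* topology because $X$ is compact. The map $p\mapsto p^j$ is continuous, so each $U_j(p^j)$ is continuous in $p$. $G$ is continuous because $F$ is. Hence $Q$ is continuous and attains its maximum.

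\emph{Uniqueness.} The marginal map $p\mapsto p^j$ is affine and each $U_j$ is concave, so $p\mapsto\sum_j v_jU_j(p)$ is concave on $\Delta(X)$. Subtracting $\delta G$ should make the objective strictly concave, using the strict convexity of $G$ asserted in the construction above.

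Once $p(v,\delta)$ is pinned down, the phrase ``the agent who is envy-free at $v$'' refers to a fixed allocation.

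\emph{Step 2: a label exists and lies in the carrier.} Apply the Lucky Lemma \ref{lem:lucky} with $\lambda=v$ and $p=p(v,\delta)$. It gives an agent $i\in\chi(v)$ who is envy-free at $p(v,\delta)$. Define $l(v)$ to be such an $i$. If several agents in $\chi(v)$ qualify, pick one by a fixed tie-breaking rule, for example the smallest index. This makes $l:V\to J$ a genuine function with $l(v)\in\chi(v)$ for every $v$, which is exactly the definition of a proper labeling.

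As a sanity check, the carrier property is what Sperner's lemma needs on the boundary. For a vertex $v$ on a face of $\Delta^{n-1}$ spanned by $\{e_j:j\in S\}$, we have $\chi(v)\subseteq S$, so the label of $v$ is one of the vertices of that face.

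\emph{Main obstacle.} The only non-formal point is the uniqueness in Step 1. The issue is that $G(p)=\sum_jF(p^j)$ depends on $p$ only through its marginals. Two distinct measures on $X$ with identical marginals therefore get the same value of $Q$, so strict concavity on all of $\Delta(X)$ may fail.

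For the corollary this does not matter. I will state the labeling relative to an arbitrary fixed selection $p(v,\delta)$ from the argmax. The Lucky Lemma is stated for any allocation solving $Q(\lambda,\delta)$, so Step 2 goes through for whichever maximizer is chosen. Uniqueness, or at least uniqueness of the marginals, is needed only later, when continuity of $\lambda\mapsto p(\lambda,\delta)$ is invoked in the limiting argument. I will flag this there and not rely on it here.
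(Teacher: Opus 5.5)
Your proposal is correct and follows the paper's route. The paper simply invokes the Lucky Lemma at each vertex $v$ to get an envy-free agent in $\chi(v)$, which is your Step 2; choosing the label among envy-free agents \emph{inside} $\chi(v)$ with a fixed tie-break is the right way to make that precise. Your remark that $G(p)=\sum_j F(p^j)$ only sees marginals is also sound: strict convexity of $F$ makes the maximizer's marginals unique, not the maximizer itself, so the label does not depend on the selection, and marginal uniqueness is what the later continuity step actually needs.
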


\begin{thm}[Sperner's lemma]
        Let simplex $\Delta^{n-1}$ be simplicially subdivided and properly labeled. Then there are an odd number of completely labeled simplices in the subdivision.
\end{thm}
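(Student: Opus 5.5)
We prove Sperner's lemma in its standard form by the classical parity (door-counting) argument, inducting on the dimension of the simplex. Here a simplex of the subdivision is \emph{completely labeled} if its vertices carry all labels of the ambient face. Throughout, a proper labeling $l$ satisfies $l(v) \in \chi(v)$. In particular, every vertex lying on a face $[e_{j_1},\dots,e_{j_m}]$ of $\Delta^{n-1}$ receives a label in $\{j_1,\dots,j_m\}$. Hence the restriction of $l$ to the induced subdivision of any face is again a proper labeling of that face. This is what makes the induction possible.

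The base case is $n=1$: the simplex is the single point $e_1$, labeled $1$, so there is exactly one completely labeled simplex. For the inductive step, fix the subdivision of $\Delta^{n-1}$. Call an $(n-2)$-dimensional cell of the subdivision a \emph{door} if its $n-1$ vertices carry exactly the labels $\{1,\dots,n-1\}$. We count pairs $(S,D)$ such that $S$ is a full-dimensional subsimplex and $D$ is a door that is a facet of $S$. We count these pairs in two ways.

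First, count by $S$. If $S$ is completely labeled, it has exactly one door facet, namely the facet opposite its vertex labeled $n$. If the labels of $S$ miss some element of $\{1,\dots,n-1\}$, then $S$ has no doors. Otherwise the labels of $S$ are exactly $\{1,\dots,n-1\}$ with one label repeated, and $S$ has exactly two doors, obtained by deleting either of the two vertices carrying the repeated label. Thus the number of pairs is congruent modulo $2$ to the number of completely labeled simplices.

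Second, count by $D$. Here we use the standard pseudomanifold property of a simplicial subdivision: an interior $(n-2)$-cell is a facet of exactly two subsimplices, while a boundary cell is a facet of exactly one. Now consider a door on the boundary. It lies in some facet $\{\lambda_j=0\}$, and every vertex of that facet has support avoiding $j$. Since the door's labels are $\{1,\dots,n-1\}$, properness forces $j=n$, so the door lies in the face $[e_1,\dots,e_{n-1}]$. Consequently, the number of pairs is congruent modulo $2$ to the number of boundary doors. These boundary doors are exactly the completely labeled cells of the induced proper labeling of $[e_1,\dots,e_{n-1}]$. By the induction hypothesis their number is odd. Combining the two counts, the number of completely labeled subsimplices of $\Delta^{n-1}$ is odd.

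The main obstacle is the combinatorial fact used in the second count: each interior $(n-2)$-cell is shared by exactly two subsimplices, and each boundary cell belongs to exactly one. For the iterated barycentric subdivision used here, this holds because it is a triangulation of a convex polytope. Rather than reprove it, we would cite it from Border \cite{Border1985}. We would also check that the induced subdivision of a face is again an iterated barycentric subdivision of that face, so that the induction applies to it.
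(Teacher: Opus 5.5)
Your proof is correct. It is the classical induction-on-dimension door-counting (parity) argument. The paper gives no proof of its own: it states Sperner's lemma as a known result, in the formulation of Border's book, and the standard proof of that result is exactly the argument you give.

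One small adjustment: run the induction over arbitrary simplicial subdivisions, as the statement requires, rather than only iterated barycentric ones. The two facts you rely on hold for any simplicial subdivision in which cells meet in common faces: the pseudomanifold property, and the fact that a subdivision restricts to a subdivision of each face. So no barycentric-specific check is needed.
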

    
Sperner's lemma states that there is a completely labeled subsimplex $\Delta^{n-1}_k$, that is $l(V_{\Delta^{n-1}_k}) = \{1,...n\}$. In other words, the labeling map takes all the values on the vertices of this subsimplex.

\subsubsection*{Useful correspondence theorem}

\begin{thm}[Berge's maximum theorem] \label{thm:max}
    Let $X$ and $\Theta$ be topological spaces. Let $f: X \times \Theta \rightarrow \R$ be continuous on the product space $X \times \Theta$. Let $C:\Theta \rightarrow \rightarrow X$ be a non-empty compact-valued correspondence.
    Define the argmax correspondence by
    \[
        C^*(\theta) = \{x \in X: f(x,\theta) \geq f(\tilde{x},\theta) \ \forall \tilde{x} \in C(\theta)\}
    \]
    If $C$ is continuous at $\theta$, then $C^*(\theta)$ is upper hemicontinuous at $\theta$ and non-empty compact-valued.
    Note: a set-valued map that is singleton-valued is upper-hemicontinous if and only if the corresponding function is continuous.
\end{thm}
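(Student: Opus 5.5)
The approach is the standard two-part argument: first upper hemicontinuity of the argmax correspondence, then nonemptiness and compactness via Weierstrass.

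\emph{Step 1: a closed-graph argument.} Fix $\theta$. Take nets (or sequences, when the spaces are first countable, as in the metrizable settings of this paper) $\theta_\alpha \to \theta$ and $x_\alpha \in C^*(\theta_\alpha)$ with $x_\alpha \to x$. I will show that $x \in C^*(\theta)$.

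Since $C$ is upper hemicontinuous and compact-valued at $\theta$, we get $x \in C(\theta)$.

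Now fix any $\tilde x \in C(\theta)$. Lower hemicontinuity of $C$ at $\theta$ supplies $\tilde x_\alpha \in C(\theta_\alpha)$ with $\tilde x_\alpha \to \tilde x$ (passing to a subnet if necessary). Optimality gives
\[
f(x_\alpha,\theta_\alpha)\ge f(\tilde x_\alpha,\theta_\alpha).
\]
Joint continuity of $f$ lets us pass to the limit, giving $f(x,\theta)\ge f(\tilde x,\theta)$. Hence $x\in C^*(\theta)$.

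\emph{Step 2: upper hemicontinuity.} Suppose, for contradiction, that $C^*$ is not upper hemicontinuous at $\theta$. Then there is an open set $W \supset C^*(\theta)$ and points $x_\alpha\in C^*(\theta_\alpha)\setminus W$ with $\theta_\alpha\to\theta$.

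By upper hemicontinuity of $C$ and compactness of $C(\theta)$, the net $x_\alpha$ has a subnet converging to some $x\in C(\theta)$. This is the standard fact that a u.h.c.\ compact-valued correspondence has the property that any net $x_\alpha\in C(\theta_\alpha)$ accumulates in $C(\theta)$; it is proved by covering $C(\theta)$ with finitely many neighbourhoods.

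Step 1 then gives $x\in C^*(\theta)\subset W$. But $W$ is open and every $x_\alpha$ lies in the closed set $X\setminus W$, so $x\notin W$. This is a contradiction.

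\emph{Step 3: nonemptiness and compactness.} $C^*(\theta)$ is nonempty by Weierstrass, since $f(\cdot,\theta)$ is continuous on the nonempty compact set $C(\theta)$. It is compact because it is a closed subset of $C(\theta)$, namely the set where $f(\cdot,\theta)$ attains its maximum.

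\emph{The closing note.} For a single-valued map, u.h.c.\ says that for every open $W\ni g(\theta)$ the image $g(\theta')$ lies in $W$ for $\theta'$ near $\theta$. This is exactly continuity of $g$ at $\theta$.

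\emph{Main obstacle.} The main difficulty is the accumulation property in Step 2 in general, non-metrizable topological spaces, which forces the use of nets and subnets rather than sequences. In this paper's applications $\Delta(X)$ and the simplex are compact metrizable, so I would first write the argument with sequences and then remark that nets carry it through verbatim.
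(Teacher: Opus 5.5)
Your proof is correct. The paper gives no proof of this statement: it quotes Berge's theorem as a classical tool. Your argument is the standard textbook one: a closed-graph property at $\theta$ from lower hemicontinuity and joint continuity of $f$, the accumulation property of upper hemicontinuous compact-valued correspondences, and Weierstrass for nonemptiness. So there is no proof in the paper to compare your route against.

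Two small remarks.

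First, the paper's displayed definition of $C^*(\theta)$ lets $x$ range over all of $X$ rather than over $C(\theta)$. Read literally, the statement is false: with $X=\R$, $f\equiv 0$ and $C(\theta)=\{0\}$ one gets $C^*(\theta)=\R$, which is not compact. Your Steps 1 and 3 silently use the intended definition with $x\in C(\theta)$, and it is worth saying so explicitly.

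Second, the first line of Step 1 (``$C$ u.h.c.\ and compact-valued gives $x\in C(\theta)$'') needs $X$ to be Hausdorff, since limits of nets need not be unique otherwise. This is harmless for two reasons. In Step 2 you already obtain $x\in C(\theta)$ from the accumulation property, so the argument only uses the optimality half of Step 1. And the spaces to which the paper applies the theorem, the simplex of weights and $\Delta(X)$, are metrizable.
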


\subsection{Existence theorem}\label{subsec4}

\begin{thm} \label{thm:existence}
    A weak Pareto-efficient, envy-free allocation exists.
\end{thm}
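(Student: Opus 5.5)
The plan is a Sperner/KKM argument on the simplex of Pareto weights, followed by two limits: first in the mesh of the subdivision, then in the regularization parameter $\delta$.

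\textbf{Step 1 (fixed $\delta$).} Fix $\delta>0$. For each $\lambda\in\Delta^{n-1}$ the objective $Q(\cdot,\lambda,\delta)$ is continuous and strictly concave on the compact convex set $\Delta(X)$, so it has a unique maximizer $p(\lambda,\delta)$. Because $Q$ is jointly continuous in $(p,\lambda)$, Berge's maximum theorem (Theorem~\ref{thm:max}) makes $\lambda\mapsto p(\lambda,\delta)$ upper hemicontinuous. Being singleton-valued, it is therefore continuous.

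\textbf{Step 2 (Sperner on the subdivision).} For each $m$, take the $m$-th iterated barycentric subdivision of $\Delta^{n-1}$, whose mesh tends to $0$. Label each vertex $v$ by an agent $i\in\chi(v)$ who is envy-free at $p(v,\delta)$; the Lucky lemma~\ref{lem:lucky} and its corollary guarantee such a label exists and is proper. Sperner's lemma then gives a completely labeled subsimplex with vertices $v^m_1,\dots,v^m_n$, where agent $i$ is envy-free at $p(v^m_i,\delta)$. That is, $U_i(p^i(v^m_i,\delta))\ge U_i(p^k(v^m_i,\delta))$ for all $k$.

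\textbf{Step 3 (limit in the mesh).} Pass to a subsequence so that all vertices converge to a common $\lambda^\delta$. By continuity of $p(\cdot,\delta)$, of the marginal maps $p\mapsto p^k$ (pushforwards under continuous projections are weak-* continuous), and of $U_i$, the inequalities survive the limit. Hence every agent $i$ is envy-free at $p^\delta:=p(\lambda^\delta,\delta)$.

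\textbf{Step 4 (limit in $\delta$).} Take $\delta_m\to0$. Since $\Delta(X)\times\Delta^{n-1}$ is compact and metrizable, pass to a subsequence with $(p^{\delta_m},\lambda^{\delta_m})\to(p^*,\lambda^*)$.
\begin{itemize}
\item Envy-freeness passes to $p^*$ by continuity, exactly as in Step 3.
\item For efficiency, fix any $q\in\Delta(X)$. Optimality gives $\sum_j\lambda^{\delta_m}_jU_j(p^{\delta_m})-\delta_mG(p^{\delta_m})\ge\sum_j\lambda^{\delta_m}_jU_j(q)-\delta_mG(q)$. Since $G$ is continuous on a compact set, it is bounded, so the $\delta_m G$ terms vanish in the limit. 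This leaves $P(p^*,\lambda^*)\ge P(q,\lambda^*)$.
\item So $p^*$ maximizes $P(\cdot,\lambda^*,U)$, and by the ``if'' direction of Theorem~\ref{thm:wpe} it is weak Pareto efficient. Strictly, that direction needs a short check when $\lambda^*$ has zero coordinates. A strict improvement for all agents raises every term $\lambda^*_jU_j$ weakly and at least one with $\lambda^*_j>0$ strictly, so it strictly raises $P$. That contradicts maximality.
\end{itemize}

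\textbf{Main obstacle.} The delicate step is Step 1, the continuity of the selection $p(\cdot,\delta)$. Without it, the labels at nearby vertices say nothing about a common limit point. This is precisely why the perturbation $-\delta G$ was introduced: it needs a continuous strictly convex $G$ to exist (Herv\'e's theorem), and it needs $G$ to be permutation-invariant so that the Lucky lemma still applies to $Q$. Once these are in place, everything else is compactness and continuity. The order of limits matters: take the mesh to $0$ first with $\delta$ fixed, and only then send $\delta\to0$.
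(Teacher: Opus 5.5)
Your argument is correct in substance and follows the paper's architecture: the perturbed objective $Q$, labels supplied by the Lucky lemma, Sperner's lemma on ever finer subdivisions, a mesh limit at fixed $\delta$, and then $\delta\to 0$. Your Step~3 passes the envy-freeness inequalities at the labelled vertices directly to the limit. The paper instead argues by contradiction that envy near $\bar\lambda$ would rule out a label. The two are equivalent.

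Where you differ is Step~4, and there your version is the sounder one. The paper asserts that the weight $\bar\lambda$ from its lemma is independent of $\delta$, then sends $\delta_n\to0$ along $p(\bar\lambda,\delta_n)$. The Sperner construction gives no such independence, because the labels are computed from $p(\cdot,\delta)$. The paper also never verifies that the limit allocation is weakly Pareto efficient. Your joint extraction $(p^{\delta_m},\lambda^{\delta_m})\to(p^*,\lambda^*)$ closes that step: you pass the optimality inequality to the limit using boundedness of $G$. So does your check of the ``if'' direction of Theorem~\ref{thm:wpe} when $\lambda^*$ has zero coordinates.

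One claim in Step~1 is false as stated, though you inherit it from the paper. $G(p)=\sum_j F(p^j)$ is strictly convex only as a function of the marginal profile $(p^1,\dots,p^N)$, and the map $p\mapsto(p^1,\dots,p^N)$ is not injective. For instance, in the Hylland--Zeckhauser example with three agents, the uniform lottery over the three even permutations and the uniform lottery over the three transpositions have identical marginals. So $Q(\cdot,\lambda,\delta)$ need not be strictly concave on $\Delta(X)$, and its maximizer need not be unique.

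What is unique is the marginal profile of the maximizers. If two maximizers had different profiles, their midpoint would give $Q$ strictly above the maximum, by concavity of $p\mapsto U_j(p^j)$ and strict convexity of $F$. Berge's theorem makes the argmax correspondence upper hemicontinuous. Its image under the continuous marginal map is then singleton-valued, hence a continuous function of $\lambda$. The Lucky lemma, the labelling, and Steps~2--4 use $p(\lambda,\delta)$ only through its marginals, so you can pick any maximizer at each $\lambda$. With this repair your proof goes through.
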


The proof of the theorem is two-fold. The majority of the effort will be spent proving the following lemma.

\begin{lem}
    For every family of utility functions $U = \{U_j\}_{j \in J}$, for every $\delta > 0$, there exists $\bar{\lambda}$ such that $p = \arg \max_{p \in \Delta(X)} Q(p,\bar{\lambda}, U, \delta)$ is envy-free. $\bar{\lambda}$ is independent of $\delta$.
\end{lem}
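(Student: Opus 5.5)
The plan is to prove the statement in two stages. Stage one works at a fixed $\delta>0$ and shows that the set
\[
E(\delta)=\{\lambda\in\Delta^{n-1} : p(\lambda,\delta)\ \text{is envy-free}\}
\]
is nonempty and closed. Stage two addresses the clause ``$\bar\lambda$ is independent of $\delta$'' by trying to show that the sets $E(\delta)$ share a common point. I should say at the outset that I do not see how to complete stage two, and I suspect the clause does not hold in general.

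\textbf{Stage one: $E(\delta)$ is nonempty and closed.}
\begin{itemize}
\item \emph{Continuity of the maximizer.} The objective $Q(\cdot,\lambda,\delta)$ is strictly concave on the convex set $\Delta(X)$, because each $U_j$ is concave and $-\delta G$ is strictly concave. Hence the maximizer $p(\lambda,\delta)$ is unique. The set $\Delta(X)$ is compact and metrizable, and $Q$ is jointly continuous in $(p,\lambda)$. By Berge's maximum theorem (Theorem \ref{thm:max}) with the constant correspondence $C(\lambda)=\Delta(X)$, the map $\lambda\mapsto p(\lambda,\delta)$ is single-valued and upper hemicontinuous, hence continuous.
\item \emph{Continuity of the relevant utilities.} Marginalization $p\mapsto p^k$ is weak-$*$ continuous. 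Therefore every map $\lambda\mapsto U_j(p^k(\lambda,\delta))$ is continuous.
\item \emph{Closedness.} This follows directly, since $E(\delta)$ is cut out by the weak inequalities $U_j(p^j)\ge U_j(p^k)$.
\item \emph{Labeling and Sperner.} Fix a barycentric subdivision of mesh below $1/m$. Label each vertex $v$ by an agent $i\in\chi(v)$ who is envy-free at $p(v,\delta)$; such an agent exists by the Lucky Lemma \ref{lem:lucky}. By the Corollary this labeling is proper. Sperner's lemma then yields a completely labeled subsimplex $S_m$ with vertices $v^m_1,\dots,v^m_n$, where agent $i$ is envy-free at $p(v^m_i,\delta)$.
\item \emph{Passing to the limit.} The simplex is compact, so along a subsequence all the $v^m_i$ converge to a common $\lambda^*$, because the diameters of the $S_m$ shrink to zero. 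For each agent $i$ we have $U_i(p^i(v^m_i,\delta))\ge U_i(p^k(v^m_i,\delta))$ for every $k$. Continuity from the first two bullets carries these inequalities to $\lambda^*$. Hence every agent is envy-free at $p(\lambda^*,\delta)$, and $\lambda^*\in E(\delta)$.
\end{itemize}

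\textbf{Stage two: a common $\bar\lambda$ across $\delta$.} We need $\bigcap_{\delta>0}E(\delta)\neq\emptyset$. Each $E(\delta)$ is a nonempty closed subset of a compact set. It would therefore suffice to show the family has the finite intersection property, for example by showing that $E(\delta')\subseteq E(\delta)$ whenever $\delta'<\delta$. My first attempt would be a rescaling of the objective:
\[
Q(p,\lambda,\delta)=\delta\Bigl(\sum_j(\lambda_j/\delta)U_j(p)-G(p)\Bigr),
\]
so that varying $\delta$ amounts to changing the scale of $\lambda$ relative to the fixed penalty $G$. I would then hope that the permutation invariance of $G$ forces the envy-free maximizers to move continuously in a way that preserves membership of a fixed $\lambda$.

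\textbf{Main obstacle.} Stage two is the real difficulty, and I have no argument that makes it work. Rescaling by $1/\delta$ takes $\lambda/\delta$ off the simplex, so the Sperner construction of stage one gives no control over it. In general $E(\delta)$ genuinely moves with $\delta$: the maximizer $p(\lambda,\delta)$ is a nontrivial trade-off between the weighted utility and $G$, and that trade-off shifts as $\delta$ changes. I would look for a counterexample with two agents and a simple quadratic $G$ before investing more effort in a proof.

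\textbf{Fallback.} If independence fails, the proof of Theorem \ref{thm:existence} only needs the following weaker statement. Take $\lambda(\delta)\in E(\delta)$ and extract a sequence $\delta_m\to0$ along which $\lambda(\delta_m)\to\bar\lambda$ and $p(\lambda(\delta_m),\delta_m)\to\bar p$. Since $\delta_m G\to0$ uniformly, the maximum theorem applied to the pair $(\lambda,\delta)$ gives $\bar p\in\arg\max P(\cdot,\bar\lambda)$. Envy-freeness survives in the limit by continuity. I would therefore state this $\delta$-dependent version as the usable form of the lemma and flag the independence clause as unproven.
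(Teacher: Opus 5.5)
Your Stage one is the paper's argument. The paper also labels vertices via the Lucky Lemma, takes completely labeled subsimplices of shrinking mesh, and extracts a limit $\bar\lambda$. It then argues by contradiction that strict envy at $\bar\lambda$ would persist on a neighbourhood containing a whole completely labeled subsimplex. You instead pass the weak no-envy inequalities at the vertices $v^m_i$ to the limit, which is the same step run forwards.

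One repair applies to you and to the paper alike. $G(p)=\sum_j F(p^j)$ is strictly convex in the vector of marginals, not in $p$: in a $3\times 3$ assignment problem, the uniform lotteries over the even and over the odd permutations have identical marginals. So $\arg\max Q$ need not be a singleton. Its marginal vector is unique, since averaging two maximizers with different marginals would strictly raise $Q$. Moreover $Q$, the $U_j$ and envy depend on $p$ only through marginals. So apply Berge to the argmax correspondence and note that its image under $p\mapsto(p^1,\dots,p^N)$ is single-valued, hence continuous in $\lambda$.

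On Stage two your suspicion is right, and the paper does not close the gap either. Its labeling of a vertex $v$ is defined through $p(v,\delta)$, so the completely labeled subsimplices and their limit $\bar\lambda$ depend on $\delta$; the independence clause is asserted, not derived. It is in fact false. Here is a counterexample.

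Take two agents, goods $A,B$, and $X=\{(S,S^c):S\subseteq\{A,B\}\}$. Use additive expected utility with values $(0.9,0.1)$ for agent 1 and $(0.6,0.4)$ for agent 2 over $(A,B)$, and let $F(\mu)=\sum_y\mu(y)\ln\mu(y)$. Write $q(S)$ for the probability that agent 1 receives $S$. Then $G=2\sum_S q(S)\ln q(S)$, so at $\lambda=(t,1-t)$ the maximizer is $q(S)\propto\exp\bigl(s_t(S)/(2\delta)\bigr)$ with $s_t(S)=t\,u_1(S)+(1-t)\,u_2(S^c)$.

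At $t=0.4$ we have $s_t(\emptyset)=s_t(A)=0.6>0.4=s_t(B)=s_t(AB)$. Hence $q(\emptyset)=q(A)=\alpha>\gamma=q(B)=q(AB)$ for every $\delta$. Agent 1 then values agent 2's bundle at $1.1\alpha+0.9\gamma$, which exceeds her own $0.9\alpha+1.1\gamma$, so she envies for every $\delta$.

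For $t\neq0.4$, as $\delta\to0$ the lottery concentrates on $\arg\max_S s_t(S)$. This set is $\{\emptyset\}$, $\{A\}$, $\{A,AB\}$ or $\{AB\}$ according as $t<0.4$, $0.4<t<0.8$, $t=0.8$ or $t>0.8$. Every lottery supported there has strict envy, so envy persists for all small $\delta$. Thus no $\lambda$ lies in $E(\delta)$ for every $\delta>0$.

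Your fallback is the correct fix, and it is what Theorem \ref{thm:existence} actually needs: the paper's proof of that theorem holds $\bar\lambda$ fixed while $\delta_n\to0$. In your version, joint continuity of $Q$ on $\Delta(X)\times\Delta^{n-1}\times[0,\infty)$ gives $\bar p\in\arg\max P(\cdot,\bar\lambda)$. The easy direction of Theorem \ref{thm:wpe}, using $\bar\lambda\neq0$, gives weak Pareto efficiency, and the weak no-envy inequalities survive the limit.

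In the example, one checks that $E(\delta)$ shrinks to $t=0.4$ from above. The limit allocations are the lotteries giving agent 1 $A$ with probability in $[5/9,5/6]$ and $\emptyset$ otherwise, which are envy-free and weakly efficient. The penalty $G$ breaks the tie at $t=0.4$ the wrong way, and only weights approaching $0.4$ at the right rate select an envy-free point of $\arg\max P$.
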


\begin{proof}
    Fix a sequence of real numbers $c_k \downarrow 0$. For every $k \in \N$, simplicially subdivide the simplex $\Delta^{n-1}$ (using barycentric subdivision, for example), such that diameter of every subsimplex is less than $c_k$. For every $k$, use Sperner's lemma to pick a completely labeled subsimplex $\Delta^{n-1}_k$ and choose a sequence $\lambda_k$ such that for each $k$, $\lambda_k$ is an element of a completely labeled subsimplex $\Delta^{n-1}_k$ and $diam(\Delta^{n-1}_k) < c_k$. By Bolzano-Weierstrass, $\lambda_{k_l} \rightarrow \bar{\lambda}$. I claim that $p^*(\bar{\lambda}, \delta) = \arg \max Q(p,\bar{\lambda}, U, \delta)$ is EF.

    Toward contradiction, assume that $p^*(\bar{\lambda}, \delta)$ is not envy-free. Then there is an agent pair $(i,j)$ such that $U_i(p^*(\bar{\lambda},\delta)) < U_i(p^*_{i \xleftrightarrow[]{}j}(\bar{\lambda},\delta))$. The exchange operator $L_{i \xleftrightarrow[]{}j}(p) = p_{i \xleftrightarrow[]{}j}$ is continuous in weak-* topology, and Berge's Maximum Theorem \ref{thm:max} guarantees the continuity of $p^*(\cdot,\delta)$, so there is $\epsilon > 0$ such that $\forall \lambda_k \ : \ \norm{\lambda_k - \bar{\lambda}} < \epsilon$, $U_i(p^*(\lambda_k,\delta)) < U_i(p^*_{i \xleftrightarrow[]{}j}(\lambda_k,\delta))$. 
    Since $\lambda_{k_l} \rightarrow \bar{\lambda}$, there exists $\bar{k_l}$ such that for all $k_l \geq \bar{k_l}$, $d(\lambda_{k_l}, \bar{\lambda}) < \epsilon/2$, Since $c_k \downarrow 0$, take a $\tilde{k_l}$ such that $\tilde{k_l} \geq \bar{k_l}$ and $c_{\tilde{k_l}} < \epsilon/2$. Then for every $\lambda \in \Delta^{n-1}_{\tilde{k_l}}$, $d(\lambda, \bar{\lambda}) \leq d(\lambda,\lambda_{\tilde{k_l}}) + d(\lambda_{\tilde{k_l}}, \bar{\lambda}) < \epsilon$, therefore $\Delta^{n-1}_{\tilde{k_l}}$ cannot be completely labeled, which is a contradiction.

\end{proof}

Note that the continuity of the exchange operator $L_{i \xleftrightarrow[]{}j}: \Delta(X) \rightarrow \Delta(X)$ follows from a more general result.

\begin{lem}
    Let $X$ be a Polish space, and $T: X \rightarrow X$ a continuous operator. Define a push-forward operator $L_T: \Delta(X) \rightarrow \Delta(X)$ by $L_T(p) = p \circ T^{-1}$. Then $L_T$ is continuous in weak-* topology.
\end{lem}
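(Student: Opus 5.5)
The plan is to verify weak-* continuity directly through its defining family of test functions, $\{p \mapsto \int f\,dp : f \in C_b(X)\}$, and the change-of-variables formula for push-forward measures. Since $\Delta(X)$ carries the weak-* topology, it suffices to show that for every bounded continuous $f: X \to \R$ the map $p \mapsto \int_X f \, d(L_T p)$ is continuous on $\Delta(X)$. Polishness of $X$ will only be used to guarantee that the weak-* topology on $\Delta(X)$ is metrizable, via the Prokhorov metric. Consequently sequential continuity suffices, although the argument below works verbatim for nets.

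The key step is the change-of-variables identity
\[
\int_X f \, d(p \circ T^{-1}) = \int_X (f \circ T) \, dp ,
\]
valid for every bounded Borel $f$. I would establish it in three stages. For an indicator $f = \mathbbm{1}_B$ it is exactly the definition $(p\circ T^{-1})(B) = p(T^{-1}(B))$, where $T^{-1}(B)$ is Borel because $T$ is continuous. It then extends to simple functions by linearity, and to bounded Borel functions by uniform approximation with simple functions, or equivalently by monotone convergence. Before this I would also note that $L_T p$ really is a Borel probability measure: countable additivity transfers through preimages, and $(L_T p)(X) = p(T^{-1}(X)) = p(X) = 1$.

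With the identity in hand, take $p_\alpha \to p$ weak-* and $f \in C_b(X)$. The function $g = f \circ T$ is a composition of continuous maps, hence continuous, and it is bounded by $\sup|f|$. Therefore $g \in C_b(X)$, and weak-* convergence gives
\[
\int f \, d(L_T p_\alpha) = \int g \, dp_\alpha \longrightarrow \int g \, dp = \int f \, d(L_T p).
\]
Since $f$ was arbitrary, $L_T p_\alpha \to L_T p$, which proves the continuity of $L_T$. Applying this to $T = T_{i\leftrightarrow j}$, which is continuous as noted in Section~\ref{sec3}, yields the continuity of the exchange operator $L_{i\leftrightarrow j}$ that is used in the existence proof.

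I expect no serious obstacle in this argument. The only point needing care is the change-of-variables step for general bounded Borel $f$, specifically checking that continuity of $T$ gives the measurability needed for $f\circ T$ to be integrable. The rest is a direct unwinding of the definition of the weak-* topology.
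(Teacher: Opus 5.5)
Your proof is correct and takes the same approach as the paper: the change-of-variables identity $\int f\,d(p\circ T^{-1})=\int (f\circ T)\,dp$, together with $f\circ T\in C_b(X)$, with Polishness used only to reduce to sequences. Your version is slightly more careful, since you justify the substitution formula and note that the argument works for nets without any metrizability assumption, but the route is the same.
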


The proof of the lemma is in the Appendix.

The rest of the proof of the theorem ensures the EF property as $\delta \rightarrow 0$.

\begin{enumerate}
    \item Fix $\epsilon > 0$. Choose $\delta(\epsilon)$ such that 
    \[
        \max_p \delta(\epsilon) G(p) \leq \epsilon
    \]
    
    \item Lemma above asserts that there exists $\bar{\lambda}$ such that $p \in \arg \max Q(\bar{\lambda}, \delta(\epsilon))$ is EF.
    
    \item Take a sequence of $\epsilon_n \rightarrow 0$. Observe $\delta_n = \delta(\epsilon_n) \rightarrow 0$. Take the corresponding sequence of $p(\bar{\lambda},\delta_n)$.

    Since $X$ is metrizable and compact, $\Delta(X)$ is a compact metric space, and for metric spaces, compactness and sequential compactness are equivalent. Since $p(\bar{\lambda},\delta_n) \in \Delta(X)$, it has a convergent subsequence $p(\bar{\lambda},\delta_{n_k})$.
    Denote its limit by $p^*(\bar{\lambda})$. If this allocation is not EF, there is agent $i$ who envies another agent $j$. Replicating the argument above, using continuity and strict inequality that characterizes envy-freeness, I conclude that there is $p(\bar{\lambda},\delta_{n_l})$ which is also not EF.
\end{enumerate}

\section{Applications} 

\subsection{Allocating indivisible goods/services over time}

Long-horizon planning is a fundamental component of many economic environments with dynamic incentives and intertemporal trade-offs. At the same time, real-world allocations are frequently subject to indivisibilities at each point in time, while institutional and normative considerations attempt to ensure fairness and equitable treatment.

Consider the industry of commercial aviation. All over the airports of the world, airlines compete strategically for a number of indivisible items, like take-off/landing time slots, baggage capacity, maintenance stands, even de-icing trucks in winter. Moreover, the nature of any airline decision-making process is dynamic, since flight timetables are designed years in advance.

Airlines preferences are aligned but not identical. They are determined by the airline’s business model. For example, low-cost operators prefer simple routes and small baggage capacities, while full-service airlines offer convoluted connections and extra luggage space. Long-haul premium compete with elevated experiences and comfortable departure/arrival times.

On the other side of things, there are airport operators that are natural monopolies and are already institutionally regulated. For example, at an international airport in Europe, time slot assignment is handled by an independent coordinator under rules aligned with International Air Transport Association guidelines and local law. The rules highlight fairness. The regulation No 95/93 (the Slot Regulation), issued by the Council of European Union, was adopted to ensure that airlines have access to the busiest EU airports "on the basis of neutrality, transparency and non-discrimination".

The primitives of the allocation problem of the airline industry are as follows. There are $m \geq 1$ distinct indivisible services (e.g.\ $m$ different facilities or amenities). They are jointly allocated at each point in time $n \in \N$.

\begin{defn}
    For $m\in\N$, the \textbf{$m$-service Cantor space} is
    \[
      \mathcal{C}^m \;=\; \{0,1\}^m \times \{0,1\}^m \times \cdots
              \;=\; \bigl(\{0,1\}^m\bigr)^{\!\infty}
              \;\cong\; \{0,1\}^{m\times\infty},
    \]
    endowed with the product topology. 
\end{defn}

To make things more concrete, let's consider a simplified version of the airline allocation problem, where the items that are allocated are time slots for take-off and landing. Let $n \in \N$ denote one 15-min time slot for take-off/landing at any of the $m \geq 1$ international airports around the world.

An element $A = (A_n)_{n\geq 0}\in \mathcal{C}^m$, where
$A_n = (a_n^1,\ldots,a_n^m)\in\{0,1\}^m$, records for each period $n$ the allocation
vector of all $m$ services:
\[
  a_n^l \;=\;
  \begin{cases}
    1 & \text{time slot number $n$ }\text{ is given to an airline at the airport } l,\\
    0 & \text{time slot number $n$} \text{ is not given to the airline at the airport } l.
  \end{cases}
\]

Consumption space of each airline is $Y \subset \mathcal{C}^m$, incorporating the constraints that an airline faces, such as:
    \begin{enumerate}
        \item Fleet capacity constraint: for every $n$, $\sum_{l = 1}^m a^l_n(j) \leq U$
        \item Minimum activity level: for every airport, for a fixed time interval $T$ (representing a day), $L \leq \sum_{n = t}^{t + T} a^l_n(j)$ for every $t = 0, T, 2T, \dots$
        \item Inventory balance constraint: For each airport $l$ and time slot $n$, define:
            \[
            a_{n}^{l,\mathrm{arr}} \in \{0,1\}
            \]
        to represent arrivals, and
            \[
            a_{n}^{l,\mathrm{dep}} \in \{0,1\}
            \]
        to represent departures. Denote aircraft stock via at the airport $l$ at time $n$ by $S_n^l$.
        It captures the number of airline's aircraft physically present at airport $l$ immediately after slot $n$. The stock-flow equation is
            \[
            S_{n+1}^{l}
            = S_{n}^{l} + a_{n}^{l,\mathrm{arr}} - a_{n}^{l,\mathrm{dep}}
            \] 
        Starting with the initial condition $S_n^l$, the fleet ready for the departure at no point in time can exceed the fleet that arrived by the time $n$, that is for every $n \in \N$, for every airport $l \in m, S_n^l \geq 0$.
    \end{enumerate}

Let $\nu: \N \rightarrow \N^m$ be a total endowment of the economy that captures how many of each services $l = 1, \ldots, m$ are available every period. The space of deterministic allocations is cut out of Cartesian product by a standard feasibility constraint:
    \[
        X = \{(A(1), \dots, A(N)) \ \Big| \ A(j) \in Y, \sum_{j \in J} A(j) \leq \nu\}
    \]

Using the existence result \ref{thm:existence}, I claim that for a rich class of preferences, weak Pareto efficient and Envy-free allocation of the time slots to the airlines over the international airports of the world exists. For the result to be applicable, I show that $X$ is a compact metrizable space. \footnote{I am grateful to 
Fabio Maccheroni for his valuable input on the application.}

Let me introduce the discounted metric on $\mathcal{C}^m$, which is defined as follows:
    \[
      d_\beta^m(A,B)
      \;=\; (1-\beta)\sum_{n=0}^{\infty}\beta^n\,
            \frac{1}{m}\sum_{l=1}^{m}|a_n^l - b_n^l|
      \qquad A,B\in \mathcal{C}^m.
    \]

\begin{prop}
    $\mathcal{C}^m$ is metrizable (discounted metric) compact space.
\end{prop}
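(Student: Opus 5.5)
The plan is to show two things: that $d_\beta^m$ is a metric on $\mathcal{C}^m$ whose topology is exactly the product topology, and that $\mathcal{C}^m$ is compact in that topology. Compactness is then immediate from Tychonoff's theorem. Each factor $\{0,1\}^m$ is a finite discrete space, hence compact, so the countable product $(\{0,1\}^m)^{\infty}$ is compact in the product topology. Thus the real content is metrizability together with the identification of the two topologies. I take $\beta\in(0,1)$ throughout.

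First I check the metric axioms. The series converges because each term is bounded by $\beta^n$; the normalisation by $(1-\beta)$ and $1/m$ gives $d_\beta^m\le 1$. Symmetry is obvious. The triangle inequality follows termwise from $|a_n^l-c_n^l|\le|a_n^l-b_n^l|+|b_n^l-c_n^l|$, summed with the nonnegative weights. Positivity holds because all weights $(1-\beta)\beta^n/m$ are strictly positive, so $d_\beta^m(A,B)=0$ forces $a_n^l=b_n^l$ for every $n$ and $l$.

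The step needing care is that $d_\beta^m$ generates the product topology; I would compare basic open sets in each direction.

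\emph{Metric balls are open in the product topology.} Given $A$ and $r>0$, choose $N$ with $\beta^{N+1}<r$. If $B$ agrees with $A$ in coordinates $0,\dots,N$, then
\[
d_\beta^m(A,B)\le(1-\beta)\sum_{n>N}\beta^n=\beta^{N+1}<r .
\]
So the cylinder set fixing the first $N+1$ blocks is a product-open neighbourhood of $A$ contained in the ball of radius $r$.

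\emph{Cylinders are open in the metric topology.} Consider a basic product-open set constraining finitely many coordinates, say up to index $N$. If $d_\beta^m(A,B)<(1-\beta)\beta^N/m$, then no single term with $n\le N$ can contribute, since any disagreement $|a_n^l-b_n^l|=1$ contributes at least $(1-\beta)\beta^n/m\ge(1-\beta)\beta^N/m$. Hence $B$ agrees with $A$ up to index $N$, and the corresponding ball lies inside the cylinder.

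Together these show the two topologies coincide. Alternatively, I could note that the identity map from the product topology to the metric topology is a continuous bijection from a compact space to a Hausdorff space, hence a homeomorphism; this needs only the first inclusion.

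Finally, I would remark that a compact metric space is separable and complete, hence Polish. This is what the appendix lemma on push-forward continuity and the metrizability of $\Delta(X)$ require. It also carries over to $Y$ and $X$ once they are shown to be closed subsets of the relevant products, since every constraint involves only finitely many coordinates.
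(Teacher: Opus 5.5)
The paper states this proposition without any proof, either where it appears in the airline application or in the appendix, so there is no argument of the author's to compare against. Judged on its own, your proof is correct and complete.

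\begin{itemize}
\item The metric axioms are routine.
\item Both topology comparisons are right. In the first, what you literally show is that each ball $B(A,r)$ contains the cylinder through $A$ fixing blocks $0,\dots,N$ once $\beta^{N+1}<r$. That is exactly what is needed for every $d_\beta^m$-open set to be product-open. In the second, the radius $(1-\beta)\beta^N/m$ forces agreement on blocks $0,\dots,N$.
\item Tychonoff for a countable product of finite discrete spaces gives compactness.
\item The compact-to-Hausdorff shortcut is valid and needs only the first comparison.
\end{itemize}

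Your write-up also supplies two things the paper leaves implicit. The first is the restriction $\beta\in(0,1)$, without which $d_\beta^m$ is not a metric at all. The second is that the metric topology coincides with the product topology in which $\mathcal{C}^m$ was defined; this is what makes ``metrizable'' and ``compact'' in the statement refer to the same topology.

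Your closing remark is worth keeping. The paper announces that it will show $X$ is compact metrizable, but it only states the claim for $\mathcal{C}^m$. Each individual capacity, activity, stock or feasibility inequality involves finitely many coordinates, so it defines a clopen set. Hence $Y$ and $X$ are intersections of such sets, and are closed in the compact metrizable spaces $\mathcal{C}^m$ and $(\mathcal{C}^m)^N$. One caveat: the stock constraint as written uses arrival and departure indicators that $\mathcal{C}^m$ does not record. Strictly, one should work over a larger finite alphabet per period, but your argument does not depend on the alphabet size.
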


\subsection{Allocating fairly and efficiently in economies with differentiated goods}

Consider a problem of allocating heterogeneous, indivisible delivery orders to drivers in a stochastic platform environment.

For example, Amazon delivery drivers face a population of delivery opportunities (orders) indexed by characteristics such as weight, volume, and location. These orders are indivisible and can only be consumed in integer units, while drivers are subject to a form of stochastic access friction arising from the matching process: available orders are revealed through a dynamic platform interface, and drivers may miss opportunities due to timing uncertainty, asynchronous updates, or algorithmic assignment variability.

Drivers' preferences are rich enough to make the question of existence of a fair and efficient matching non-trivial. The set of “lucrative” orders — those with high value across drivers — is limited in supply, generating congestion-like competition for a small number of highly ranked opportunities. At the same time, drivers exhibit heterogeneous preferences over non-price attributes of orders: some prefer remote destinations, others urban routes; some favor fewer heavy packages, while others prefer larger bundles of lighter deliveries; and preferences also vary over temporal dimensions such as morning versus evening shifts.

This combination of indivisible goods, stochastic allocation opportunities, and multidimensional preference heterogeneity naturally lends itself to an economy with differentiated commodities in the spirit of Mas-Colell (1975) \cite{MasColell1975}, where orders can be interpreted as differentiated goods.

MasCollel describes an exchange economy in which differentiated commodities are available in integral amounts (that is, not perfectly divisible) and the variety of commodities comes from the richness of characteristics. Let $(K,d)$ be a compact metric space, called the space of commodity characteristics. The space of individual commodity bundles is the space of non-negative, bounded Borel measures that are integer valued. The consumption set $Y$ is the set of all individual commodity bundles such that $a(K) \leq \alpha$, where $\alpha$ is a large positive integer. 

Note that a set of all integer-valued Borel measures on a compact metric space $X$, which are uniformly bounded by an integer $\alpha$, is compact in weak-* topology (proof is in the Appendix).

Let $\nu$ be a total endowment measure of the economy such that $supp(\nu) = K$. The space of deterministic allocations is cut out of Cartesian product by a standard feasibility constraint:
\[
    X = \{(a_1, \dots, a_n) \ \Big| \ a_j \in Y, \sum_{j \in J} a_j \leq \nu\}
\]

Using the existence result \ref{thm:existence}, I claim there is a way to allocate orders efficiently and fairly. I would like to highlight that perceived fairness is an important concern for platform-based labor markets because workers who feel they are treated unfairly may reduce participation, try to game the system, or stop trusting the platform altogether.

For example, this issue has appeared repeatedly in the context of Amazon Flex, where drivers have complained that the process determining who receives the most attractive delivery blocks is opaque and difficult to understand. In response, the platform introduced driver rating systems, increased enforcement against bots used to capture desirable routes, and made parts of the allocation process more structured. Still, many drivers continue to express concerns about transparency and about whether delivery opportunities are allocated fairly across workers.

\subsection{Existence of wPE and EF allocation in the cake cutting problem}

The problem of allocating land or, more broadly, allocating a divisible good among agents in a way that satisfies certain fairness and/or efficiency criteria traces back to Steinhaus \cite{Steinhaus1948}, who formalized the problem known as the cake-cutting problem, starting the literature on fair division.
In this literature, allocations are modeled as partitions of unity.

\begin{defn}
    A \textbf{partition of unity} on a topological space \( X \) is a finite set \( R \) of functions from \( X \) to the unit interval \([0,1]\) such that for every point \( x \in X \) the sum of all the function values at \( x \) is 1, i.e.,
        \[
        \sum_{\rho \in R} \rho(x) = 1.
        \]
\end{defn}

Preferences for land (or cake) are modeled as finite measures, and whether it is assumed that the measures are atomless or not calls for conceptually different interpretations of allocations. 

The modeling of allocations as partitions of unity is justified as an intermediate technical step. When agents preferences are assumed to be atomless, it is shown that fair allocation can be taken to be a partition.

Regardless of whether utility measures are atomless or not, applying the main theorem to the partitions of unity instead of marginal probability measures is formally justified. In the following paragraphs, I show that for every probability measure on a deterministic consumption space of land partitions, there exists a payoff-equivalent partition of unity.

To ensure that the existence theorem applies, I need to make stronger assumptions on the space that is to be divided. Instead of working with $(L, \mathcal{B})$, a measurable space, I assume that $L$ is a separable first countable Hausdorff topological space together with its Borel sigma-algebra $\mathcal{B}$. The set of deterministic consumption $Y$ is identical for all agents. It is the set of all measurable subsets of $L$, $Y = \mathcal{B}$. The set of deterministic allocations $X \subset Y^N$ contains partitions of $L$ into $N$ subsets: $X = \{(F_1, \dots, F_N) \ \Big | \ F_i \cap F_j = \emptyset \ \ \forall i,j \in N; \ \bigcup_{i \in N} F_i = L \}$.

In order to talk about probability measures on $X$, start with $(Y,\tau^\prime)$ as a topological space. I endow it with a relative topology from the power set $2^L$. This topology is constructed by extending the Vietoris (or Fell) topology for closed subsets of the space (see the appendix for the extension, \cite{Hildenbrand1974}). Denote by $\mathcal{M}$ the set of Borel sets of $Y$, and take $(y,\mathcal{M})$ to be a measure space.

$(X,\tau)$ is a topological space with a subset topology inherited from a product topology of $Y^N$. Denote by $(X,\mathcal{B}^\prime)$ a measure space $X$ together with its Borel sigma-algebra.

A probability measure $p$ on the set of deterministic allocations is a mapping from $\mathcal{B}^\prime$ to $[0,1]$ that satisfies the countable additivity property. It generates a family of marginal probability measures $\{p_j\}_{j \in \{1, \dots,N\}}$, each on the space of all measurable subsets of $L$.

\begin{lem} \label{lem:payoff}
    Let $P$ be a probability measure on the space of all measurable subsets of $L$, $P \in \Delta(\mathcal{B})$. There exists $f: L \rightarrow [0,1]$ such that for all finite measures $\mu$ on $L$
    \[
        \int_\mathcal{B} \mu(F) dP(F) = \int_L f(x) d \mu(x)
    \]
\end{lem}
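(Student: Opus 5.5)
The natural candidate is the \emph{coverage function} of the random set: for each point $x \in L$, let
\[
    f(x) = P\bigl(\{F \in \mathcal{B} : x \in F\}\bigr),
\]
the probability that the random set drawn from $P$ contains $x$. The identity in Lemma \ref{lem:payoff} should then follow by Fubini--Tonelli applied to the indicator $(F,x) \mapsto \mathbf{1}_{\{x \in F\}}$ on the product $\mathcal{B} \times L$:
\[
    \int_{\mathcal{B}} \mu(F)\, dP(F) = \int_{\mathcal{B}} \int_L \mathbf{1}_{\{x \in F\}}\, d\mu(x)\, dP(F) = \int_L \int_{\mathcal{B}} \mathbf{1}_{\{x\in F\}}\, dP(F)\, d\mu(x) = \int_L f(x)\, d\mu(x).
\]
Clearly $f$ takes values in $[0,1]$. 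Both $P$ and $\mu$ are finite, so nonnegativity of the integrand lets Tonelli apply without integrability issues. The function $f$ does not depend on $\mu$, which is exactly the uniformity over all finite measures that the statement requires.

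I would carry out the steps in this order.
\begin{enumerate}
    \item Define $f$ and check that $\{F : x \in F\}$ lies in $\mathcal{M}$, so that $f(x)$ is defined.
    \item Check that $F \mapsto \mu(F)$ is $\mathcal{M}$-measurable, so that the left-hand side makes sense.
    \item Verify joint measurability of the incidence set $E = \{(F,x) : x \in F\}$ in $\mathcal{M} \otimes \mathcal{B}$.
    \item Apply Tonelli as above.
\end{enumerate}
Once $E$ is jointly measurable, Tonelli gives both the measurability of $x \mapsto f(x)$ and that of $F \mapsto \mu(F)$ as by-products, so step 3 is the heart of the argument.

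The main obstacle is step 3. The topology on $Y=\mathcal{B}$ is the extension of the Vietoris/Fell topology described in the appendix (following \cite{Hildenbrand1974}), and for arbitrary Borel sets the incidence relation need not obviously be measurable. I would first try to exploit the standing assumptions on $L$: it is separable, first countable and Hausdorff. Using a countable family of open sets $\{O_k\}$ generating the relevant structure, the plan is to write $E$ as a countable Boolean combination of rectangles of the form $\{F : F \cap O_k \neq \emptyset\} \times O_k$ (hit-and-miss sets), which are measurable in the Fell/Vietoris Borel structure. This works cleanly for closed $F$. For general Borel $F$, I would try to reduce to that case by a monotone-class argument over the construction of the extended topology. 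If that reduction fails, the fallback is to weaken the conclusion: define $f$ as a Radon--Nikodym-type density. Concretely, for each fixed $\mu$, the set function $B \mapsto \int_{\mathcal{B}} \mu(F \cap B)\, dP(F)$ is a measure dominated by $\mu$, so it has a density with respect to $\mu$. One would then still need a single $f$ working for all $\mu$ simultaneously. This can be obtained only if the densities are consistent, which again reduces to the pointwise coverage probability. For that reason the joint-measurability route above is the one I would commit to.
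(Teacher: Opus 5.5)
Your plan is the paper's argument: the coverage function $f(x)=P(\{F:x\in F\})$ combined with Tonelli for the incidence indicator. You also correctly locate the crux, namely that everything rests on $E=\{(x,F):x\in F\}$ belonging to $\mathcal{B}\otimes\mathcal{M}$. But step 3, as you formulate it, is false, so no monotone-class argument can deliver it.

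Take $L=[0,1]$, which meets the standing hypotheses. Then $E\notin\mathcal{B}\otimes\Sigma$ for \emph{every} $\sigma$-algebra $\Sigma$ on $\mathcal{B}$; this is Aumann's classical argument. Any member of $\mathcal{B}\otimes\Sigma$ lies in the $\sigma$-algebra generated by countably many rectangles $A_k\times M_k$. Setting $\phi(F)=(\mathbf{1}_{M_k}(F))_{k}\in\{0,1\}^{\mathbb{N}}$, this gives $E=(\mathrm{id}\times\phi)^{-1}(E')$ for some Borel $E'\subseteq L\times\{0,1\}^{\mathbb{N}}$. Every Borel $F\subseteq L$ is then the section $\{x:(x,\phi(F))\in E'\}$ of one fixed Borel set. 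Hence all Borel subsets of $[0,1]$ would lie in a single class $\Sigma^0_\alpha$, contradicting the non-collapse of the Borel hierarchy. Your hit-and-miss decomposition $E^c=\bigcup_k O_k\times\{F:F\cap O_k=\emptyset\}$ is valid only on closed sets. It also needs a countable base, which separability plus first countability does not give. Passing from closed sets to arbitrary Borel sets is exactly the step that cannot be done.

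The paper does not close this gap either. Its appendix identity $\{(y,F):y\notin F\}=\bigcap_n\bigcup_{y\in L}N^y_n(y)\times A(y)$ is incorrect: with metric balls on $[0,1]$ the right-hand side is $\{(z,F):z\in\overline{L\setminus F}\}$, which contains $(1/2,[0,1/2])$. So the lemma as stated, with one $f$ for all finite $\mu$, is established neither by your plan nor by the paper. Note that Dirac masses force $f(x)=P(\{F:x\in F\})$, so there is no freedom in choosing $f$. Some change of statement is needed.

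One option is to restrict $P$ to random closed sets on second countable $L$. There your decomposition works and Tonelli applies verbatim.

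The option suited to the cake-cutting application is your own fallback with a single reference measure $\lambda$; the paper later uses $\lambda=\sum_j m_j\mu_j$. Write $\langle U\rangle=\{F:F\subseteq U\}$. On metrizable $L$, outer regularity gives that $\{F:\mu(F\cap B)<t\}=\bigcup\{\langle U\rangle: U\text{ open},\ \mu(U\cap B)<t\}$ is open. Hence $\nu(B)=\int\lambda(F\cap B)\,dP(F)$ is a well-defined measure with $\nu\le\lambda$. Then $f=d\nu/d\lambda\in[0,1]$ satisfies the identity for every finite $\mu\ll\lambda$: approximate $d\mu/d\lambda$ by simple functions and use monotone convergence. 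The consistency issue you worried about disappears, because this single $\lambda$ dominates all the agents' measures when every $m_j>0$.
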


\begin{proof}
    Observe that $(L,\mathcal{B}, \mu)$ and $(Y,\mathcal{M},P)$ are $\sigma$-finite measure spaces. Define $g: L \times \mathcal{B} \rightarrow [0,1]$ as
    \[
        g(y,F) = 
        \begin{cases}
        1, & \text{if } y \in F \\
        0, & \text{otherwise}
        \end{cases}
    \]
    Note that $g$ is nonnegative and $\mathcal{B} \otimes \mathcal{M}$ - measurable (proof in the appendix). Note that for every $F$, $\mu(F) = \int_L g(y,F) d\mu(y)$. By Tonelli's theorem,
    \[
        \int_\mathcal{B} \mu(F) dP(F) = \int_\mathcal{B} \int_L g(y,F) d\mu(y) dP(F) = \int_L \int_\mathcal{B} g(y,F) dP(F) d\mu(y) = \int_L f(y) d \mu(y)
    \]
    where $f(y) = \int_\mathcal{B} g(y,F) dP(F) = P(\{F \in \mathcal{B}:y \in F\})$. By construction $f(L) \subset [0,1]$.
\end{proof}

Therefore, for an agent with preferences $\mu_j$, for every random allocation, there is a bounded map $f_j: L \rightarrow [0,1]$ that gives them the same payoff.

Let $P$ be a probability measure on $X$. It gives rise to a family of marginal probability measures $\{P_j\}_{j \in J}$.

\begin{lem}
    Let $P \in \Delta(X)$. Then $\sum_{j \in J} f_j = 1$, where $f_j$ is a mapping from Lemma \ref{lem:payoff} corresponding to $P_j$.
\end{lem}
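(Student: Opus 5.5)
The plan is to use the explicit formula for $f_j$ obtained in the proof of Lemma \ref{lem:payoff}: for every $y \in L$,
\[
f_j(y) = P_j(\{F \in \mathcal{B} : y \in F\}) = P(\pi_j^{-1}(\{F : y \in F\})) = P(\{(F_1,\dots,F_N) \in X : y \in F_j\}).
\]
The second equality is just the definition of the marginal $P_j$ as the push-forward of $P$ under the projection $\pi_j$. Before using it, I would check that the set $A_j(y) := \{(F_1,\dots,F_N)\in X: y \in F_j\}$ is Borel in $X$. This follows from the measurability of $g$ established in the appendix: for fixed $y$, the section $F \mapsto g(y,F)$ is $\mathcal{M}$-measurable, so $\{F : y\in F\}\in\mathcal{M}$. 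Since $\pi_j$ is continuous, hence Borel, $A_j(y)=\pi_j^{-1}(\{F: y\in F\})$ is then in $\mathcal{B}'$.

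The key step is pointwise: fix $y \in L$ and show that the sets $A_1(y),\dots,A_N(y)$ partition $X$. Every element of $X$ is a partition $(F_1,\dots,F_N)$ of $L$. Because $\bigcup_i F_i = L$, the point $y$ lies in some $F_j$, so the $A_j(y)$ cover $X$. Because $F_i\cap F_j=\emptyset$ for $i\neq j$, it lies in at most one of them, so the $A_j(y)$ are pairwise disjoint. Finite additivity of $P$ then gives
\[
\sum_{j\in J} f_j(y) = \sum_{j \in J} P(A_j(y)) = P\Big(\bigcup_{j} A_j(y)\Big) = P(X) = 1 .
\]
Since $y$ was arbitrary, $\sum_j f_j \equiv 1$ identically, not merely $\mu$-a.e. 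Together with $f_j(L)\subset[0,1]$, this shows that $(f_j)_{j\in J}$ is a partition of unity.

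The only genuinely delicate point is measurability. The algebra is trivial, but one must make sure each $A_j(y)$ is an event in $\mathcal{B}'$, since otherwise $P(A_j(y))$ is undefined. With the Vietoris/Fell-type topology on $Y$, this rests entirely on the appendix claim that $g$ is jointly measurable, which yields measurable sections. If that claim only covered $\mathcal{B}\otimes\mathcal{M}$-measurability and not every individual section, I would fall back on the fact that sections of jointly measurable functions are always measurable, which is a standard property of product $\sigma$-algebras.
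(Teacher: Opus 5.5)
Your argument is the paper's proof: fix $y \in L$, note that the sets $A_j(y)=\{(F_1,\dots,F_N)\in X : y\in F_j\}$ are pairwise disjoint (because the $F_i$ are disjoint) and cover $X$ (because $\bigcup_i F_i = L$), and conclude $\sum_j f_j(y)=\sum_j P(A_j(y))=P(X)=1$ by finite additivity. The paper leaves the measurability of $A_j(y)$ implicit, and your check of it is sound; it can also be done directly, without the joint measurability of $g$, since $\{F : y\notin F\}=\langle L\setminus\{y\}\rangle$ is open in the Vietoris-type topology, so $\{F : y\in F\}$ is closed and $A_j(y)=\pi_j^{-1}(\{F : y\in F\})$ is Borel in $X$.
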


\begin{proof}


    For $x \in L$, let $A_j(x) = \{(F_1,\dots,F_j,\dots,F_N): x \in F_j \ \text{and} \ F_k \cap F_l = \emptyset \ \forall k,l \in J; \bigcup_{k \in J} F_k = L\}$. $A_j(x)$ denotes the set of all partitions, in which point $x$ is allocated to agent $j$. For every $x \in L$, note that the sets $\{A_j(x)\}_{j \in J}$ are pairwise disjoint, that is, $A_i(x) \cap A_j(x) = \emptyset \ \forall i \neq j$: if $(\tilde{F}_1, \dots, \tilde{F}_N) \in A_j$, then $x \in \tilde{F}_j$ and $x \notin \tilde{F}_i$, hence $(\tilde{F}_1, \dots, \tilde{F}_N) \notin A_i$.
    Also for every $x$, $\cup_{j \in J} A_j = X$. The direction $\cup_{j \in J} A_j \subset X$ is trivial. For the other direction, take $(\tilde{F}_1, \dots, \tilde{F}_N) \in X$. Since it is a partition, let $j$ denote the set such that $x \in \tilde{F}_j$. Then $(\tilde{F}_1, \dots, \tilde{F}_N) \in A_j$.

    Hence, for every $x$, $1 = P(X) = P(\bigcup_{j \in J} A_j) = \sum_{j \in J} P(A_j) = \sum_{j \in J} P_j(\{F: x \in F\}) = \sum_{j \in J} f_j(x)$.
\end{proof}


From now on, the set of stochastic allocations is taken to be the set of partitions of unity on $L$, instead of the set of probability measures on the set of deterministic allocations. I denote the set of stochastic allocations by $Z$:
\[
    Z = \{(f_1, \dots, f_N) \ \text{such that for every} \ j, f_j: L \rightarrow [0,1] \ \text{and} \ \sum_{j \in J} f_j = 1 \}.
\]

Following the cake-cutting literature, I embed partitions of unity into $L^\infty(L, \lambda)$ for some measure $\lambda$. Usually it is defined as an "average taste" measure $\lambda = \sum_{j \in J} m_j \mu_j$ for a collection of $\{m_j\} \in \mathbb{R}^N$ such that $\sum_{j \in J} m_j = 1, m_j \geq 0$ (agents' weights).

In particular, Berliant \cite{Berliant1985} takes $L$ to be a compact subset of $\mathbb{R}^2$. He models preferences as measures that are absolutely continuous with respect to the Lebesgue measure on $\mathbb{R}^2$. For agent $j$, the utility from an allocation $(f_1, \dots, f_N)$ is equal to $U_j(f_j) = \int_{L} f_j u_j d\lambda$, where $\lambda$ is the Lebesgue measure and $u_j$ is a corresponding Radon-Nikodym derivative.

The topology on $L^\infty(L, \lambda)$ is a weak-* topology. Observe that the set of allocations $Z$ is closed in the product topology:
\[
    Z = \bigcap_{j \in 1,\dots,N} \{(f_1, \dots, f_N) : f_j \geq 0\} \bigcap \{(f_1, \dots, f_N) : \sum_{j \in 1,\dots,N} f_j = 1\}
\]
For every $j$, $\{(f_1, \dots, f_N) : f_j \geq 0\} = \bigcap _{x \in L}\{(f_1, \dots, f_N) : f_j(x) \geq 0\}$, where for every $x \in L$, $\{(f_1, \dots, f_N) : f_j(x) \geq 0\}$ is the preimage of a closed set $[0, \infty)$ under the composition of a linear functional $f_j \rightarrow f_j(x)$  with the projection mapping $\pi_j$, both continuous by construction. 
Analogously, $\{(f_1, \dots, f_N) : \sum_{j \in 1,\dots,N} f_j = 1\} = \bigcap_{x \in L} \{(f_1, \dots, f_N) : \sum_{j \in 1,\dots,N} f_j(x) = 1\}$, and for every $x$, $\{(f_1, \dots, f_N) : \sum_{j \in 1,\dots,N} f_j(x) = 1\}$ is the preimage of a closed set under the composition $(f_1, \dots, f_N) \in (L^\infty)^N \rightarrow \sum f_j \in L^\infty \rightarrow \sum f_j (x)$. The summation map is continuous since $L^\infty$ is a TVS.

The last bit is to show that the map $f = (f_1, \dots, f_N) \rightarrow (U_1(f), \dots U_N(f))$ is continuous in the weak-* topology of $L^\infty (L, \lambda)$. The proof is due to Akin \cite{Akin1995}. Observe that $U_j: f_j \rightarrow \int f_j d \mu_j$ is continuous in weak-* topology of $L^\infty(L, \mu_j)$ since $U_j(f_j) = \int f_j g d\mu_j $ with $g \in L^1(\mu_j), g = 1$. Given that $\mu_j$ is absolutely continuous wrt $\mu$, $U_j(f_j) = \int f_j g d\mu $ with $g \in L^1(\mu), g = \frac{d\mu}{d\nu}$.


Since $L$ is separable, $L^1(L,\lambda)$ is a separable Banach space, so the unit ball of its dual space $L^\infty(L,\lambda)$ is metrizable and compact (Banach - Alaoglu). $Z$ is a compact metric space, which is also convex and satisfies the anonymity property. The set of its so-called marginals $f_j$ is identical for all agents. 

Note that these properties of the set of allocations, together with the agents' preferences represented as finite positive measures, are sufficient to apply the lucky lemma \ref{lem:lucky} and the existence theorem \ref{thm:existence} in the same fashion as to the set of random allocations $\Delta(X)$, constructed in the setup. By applying the existence theorem to the set $Z$, I conclude that there exists a weakly Pareto efficient and envy-free allocation of land $\bar{f}$ among $N$ agents. Notice that the existence result is independent of whether preference measures are atomless or not.

Moreover, in both cases, the allocation in question can be characterized in greater detail. To this end, I invoke several results from Dvoretzky, Wald, and Wolfowitz \cite{DvoretzkyWaldWolfowitz1951}, later referred to as DWW.

Before I state the theorems, let me introduce some definitions. Given an allocation $f \in Z$, let $\nu(f;\mu)$ be the $N \times N$ dimensional vector of agents' utilities from their allocation and the allocation of every other agent
$$\nu(f) = \begin{bmatrix}
    \int_L f_1(x) d\mu_1(x), \dots, \int_L f_1(x) d\mu_N(x), \\
    \dots \\
    \int_L f_N(x) d\mu_1(x), \dots, \int_L f_N(x) d\mu_N(x)
\end{bmatrix}$$

The set of agents' utilities under all feasible stochastic allocations (partitions of unity) is denoted by $V_N(\mu_1, \dots, \mu_N)$. It is the set of all vactors $\nu(f;\mu)$, where $f \in Z$. 

The set of agents' utilities under all feasible stochastic allocations that are simple functions is denoted by $V_N^0(\mu_1, \dots, \mu_N)$. It is the set of all vectors $\nu(f^0;\mu)$, where $f^0 = (f_1^0,...,f_N^0)$ such that $f_j^0 = \sum_{k} a_k\chi_{E_k}$ and $\sum_{j \in J}f_j = 1$. In other words, $f_j$ is a simple (step) function for every $j$.

Finally, the set of agents' utilities under all feasible deterministic allocations is denoted by $V_N^*(\mu_1, \dots, \mu_N)$. It is the set of all points $\nu(f^*;\mu)$, where $f^* = (f_1^*,...,f_N^*)$ such that $f_j^* = \chi_{E_j}$ and $\cup_{j \in N} E_j = L$. 

\begin{thm}[Theorem 3 in DDW] \label{thm:3}
    If measures $\mu_1, \dots, \mu_N$ are finite, the simple range $V^0_N(\mu)$ coincides with $V_N(\mu)$: every point in $V_N$ can be represented as $\nu(f^0)$ for some $f^0$. 
\end{thm}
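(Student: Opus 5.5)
The plan is to reduce the statement to a vector-measure problem and treat the atomless part and the atoms of the measures separately. Since $\nu(f^0)$ with $f^0$ simple is by definition a point of $V_N$, only the inclusion $V_N \subset V^0_N$ needs proof. Put $\mu = \sum_{j} \mu_j$, which is finite, and let $g_k = d\mu_k/d\mu$. Every entry of $\nu(f)$ is then $\int_L f_i g_k \, d\mu$, so $\nu(f)$ is linear in $f$ and $\nu(f) \in \mathbb{R}^{N\times N}$ with $d = N^2$. Fix $f \in Z$. Write $L = L_0 \cup \bigcup_{k \geq 1} A_k$, where $L_0$ is the atomless part of $\mu$ and $(A_k)$ is an at most countable family of disjoint $\mu$-atoms. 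Each $\mu_j \ll \mu$, so this decomposition also splits every $\mu_j$ into an atomless part and its atoms. On each atom $A_k$, $f$ is $\mu$-a.e. equal to a constant vector $c_k \in \Delta^{N-1}$, and changing $f$ on a $\mu$-null set does not change $\nu(f)$.

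\emph{Step 1 (atomless part).} On $L_0$ I would use Lyapunov's theorem in the form of DWW Theorem 1. The vector measure $E \mapsto (\mu_1(E), \dots, \mu_N(E))$ restricted to $L_0$ is nonatomic. Hence for the fractional $f|_{L_0}$ there is a measurable partition $E_1, \dots, E_N$ of $L_0$ with
\[
\int_{E_i} d\mu_k = \int_{L_0} f_i \, d\mu_k \quad \text{for all } i,k .
\]
So on $L_0$ I replace $f$ by the $\{0,1\}$-valued functions $\chi_{E_i}$, which take finitely many values and sum to $1$.

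\emph{Step 2 (atoms).} The contribution of the atoms to $\nu(f)$ is $\sum_k c_k \otimes m_k$, where $m_k = (\mu_1(A_k), \dots, \mu_N(A_k))$ and $\sum_k \|m_k\| < \infty$. Let $K$ be the set of sequences $(c'_k)_k \in (\Delta^{N-1})^{\mathbb{N}}$ with $\sum_k c'_k \otimes m_k = \sum_k c_k \otimes m_k$. This set is convex and nonempty. It is also compact in the product topology, because dominated convergence makes the constraint continuous. By Krein--Milman, $K$ has an extreme point $c^*$.

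I would then show that at most $d$ coordinates of $c^*$ lie outside the vertex set $\{e_1, \dots, e_N\}$. This extreme-point step is the main obstacle, since it is an infinite-dimensional face-dimension argument. Suppose $d+1$ coordinates $c^*_{k_0}, \dots, c^*_{k_d}$ are non-vertices. Each such coordinate lies in the relative interior of a face of $\Delta^{N-1}$ of dimension at least $1$, so it admits a nonzero admissible direction $w_{k_r}$ with $\sum_i (w_{k_r})_i = 0$ that keeps it in that face for small steps of either sign. The $d+1$ vectors $w_{k_r} \otimes m_{k_r}$ lie in $\mathbb{R}^d$, so they are linearly dependent. If some $m_{k_r} = 0$, that coordinate can be perturbed freely. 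Otherwise a nontrivial dependence gives a perturbation $c^* \pm t\,(\alpha_r w_{k_r})_r$ that stays in $K$ for small $t>0$, contradicting extremality. If that dependence is awkward to set up, a fallback is to restrict first to finitely many atoms carrying almost all of the mass and use the finite-dimensional Carathéodory bound, but the extreme-point route avoids any limiting argument.

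\emph{Step 3 (assembly).} Define $f^0 = (\chi_{E_1}, \dots, \chi_{E_N})$ on $L_0$ and $f^0 = c^*_k$ on $A_k$. On all but at most $d$ atoms the value is a vertex $e_i$, so each $f^0_j$ takes finitely many values. Its level sets are measurable, being finite or countable unions of the $E_i$ and the $A_k$, so $f^0_j = \sum_k a_k \chi_{E'_k}$ is a simple function in exactly the sense of the definition of $V^0_N$, and $\sum_j f^0_j = 1$. By construction $\nu(f^0) = \nu(f)$, which proves $V_N \subset V^0_N$ and hence $V^0_N = V_N$.
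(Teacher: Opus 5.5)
The paper contains no proof of this statement. It is quoted from Dvoretzky--Wald--Wolfowitz and used as a black box, so your argument has to stand on its own, and it does.

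The individual steps are sound:
\begin{itemize}
\item Splitting $L$ into a $\mu$-atomless part and countably many disjoint atoms is legitimate, and each $\mu_k|_{L_0}$ is atomless because $\mu_k\ll\mu$.
\item $f$ is a.e.\ constant on each atom.
\item $K$ is compact in the product topology, and Krein--Milman applies to it.
\item The step you flag as the main obstacle goes through exactly as written. A non-vertex $c^*_k$ has two positive coordinates $a\neq b$, so $w=e_a-e_b$ is admissible in both directions. A linear dependence among $d+1$ vectors of $\mathbb{R}^d$ then gives a perturbation, supported on finitely many coordinates, that leaves the constraint unchanged and contradicts extremality. The truncation fallback is not needed.
\end{itemize}
One bookkeeping point: justify Step~1 directly from Lyapunov's theorem. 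In the form the paper quotes DWW, ``$V_N^*=V_N$ for atomless measures'' is Theorem~3 combined with Theorem~4, so citing it from there would be circular.

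Compared with the standard range argument, your route is more hands-on. The range argument runs as follows. $Z$ is weak-$*$ compact and convex in $L^\infty(\mu)^N$, and $\nu$ is linear and weak-$*$ continuous, so $V_N$ is compact and convex in $\mathbb{R}^{N\times N}$. For an extreme point $r$ of $V_N$, the fibre $\{h\in Z:\nu(h)=r\}$ is a face of $Z$. Its Krein--Milman extreme points are therefore extreme in $Z$, i.e.\ $\{0,1\}$-valued. By Carath\'eodory, every point of $V_N$ is then $\nu$ of a convex combination of at most $N^2+1$ deterministic partitions, and such a combination is simple.

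That proof needs neither Lyapunov nor the atomic/atomless split. It also yields $V_N=\operatorname{conv}V_N^*$ directly. This is exactly the payoff-equivalent finite-support lottery over partitions that the paper later obtains by combining Theorem~3 with its Farkas-lemma appendix theorem.

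Your route gives finer pointwise information instead: $f^0$ is $\{0,1\}$-valued on the atomless part and on all but at most $N^2$ atoms. You can also drop the appeal to Lyapunov in Step~1. Run your Step~2 perturbation in $L^\infty(\mu)^N$ on an extreme point of $\{h\in Z:\nu(h)=\nu(f)\}$. Any positive-measure subset of $L_0$ on which two coordinates of $h$ exceed some $\epsilon>0$ can be cut into $N^2+1$ pieces of positive measure. The same dependence argument then applies, so this handles the atomless part as well.
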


\begin{thm}[Theorem 4 in DDW] \label{thm:4}
    If measures $\mu_1, \dots, \mu_N$ are finite and atomless, the degenerate range $V^*_N(\mu)$ coincides with the simple range $V^0_N(\mu)$: every point in $V_N^0$ can be represented as $\nu(f^*)$ for some $f^*$. 
\end{thm}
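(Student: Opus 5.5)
The plan is to reduce the statement to the Lyapunov convexity theorem, applied separately on each cell of a finite partition on which the simple allocation is constant. Fix $f^0=(f^0_1,\dots,f^0_N)\in Z$ with every $f^0_j$ simple, and write $\mu=(\mu_1,\dots,\mu_N)$ for the $\mathbb{R}^N$-valued vector measure. I will construct measurable sets $E_1,\dots,E_N$ partitioning $L$ such that
\[
\int_L \chi_{E_j}\, d\mu_i = \int_L f^0_j\, d\mu_i \quad \text{for all } i,j\in J .
\]
These equalities say exactly that $\nu(f^*)=\nu(f^0)$ for $f^*_j=\chi_{E_j}$, so $V^0_N(\mu)\subset V^*_N(\mu)$. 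The reverse inclusion is immediate, since indicator functions of a partition are simple functions summing to $1$.

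\textbf{Step 1 (common refinement).} Since there are finitely many simple functions $f^0_j$, take the common refinement of their level sets. This gives a finite measurable partition $\{B_1,\dots,B_K\}$ of $L$ and constants $a_{jk}\in[0,1]$ with $f^0_j=\sum_k a_{jk}\chi_{B_k}$. The constraint $\sum_j f^0_j=1$ gives $\sum_j a_{jk}=1$ for every $k$. Consequently
\[
\int_L f^0_j\,d\mu_i=\sum_k a_{jk}\,\mu_i(B_k),
\]
so it suffices to split each cell $B_k$ into disjoint measurable pieces $B_{k1},\dots,B_{kN}$ with $\mu(B_{kj})=a_{jk}\,\mu(B_k)$ as vectors in $\mathbb{R}^N$. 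Then $E_j=\bigcup_k B_{kj}$ does the job.

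\textbf{Step 2 (splitting a cell).} Restricted to the measurable subsets of $B_k$, the vector measure $\mu$ is finite. It is also atomless, because each coordinate $\mu_i$ is atomless by hypothesis. By Lyapunov's theorem its range $\{\mu(A):A\subset B_k\}$ is a convex (and compact) subset of $\mathbb{R}^N$. That range contains $0=\mu(\emptyset)$ and $\mu(B_k)$, hence it contains $t\,\mu(B_k)$ for every $t\in[0,1]$.

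I then peel off the pieces one at a time. Choose $B_{k1}\subset B_k$ with $\mu(B_{k1})=a_{1k}\mu(B_k)$. Apply the same argument to the remainder $R_1=B_k\setminus B_{k1}$, which has $\mu(R_1)=(1-a_{1k})\mu(B_k)$. If $1-a_{1k}>0$, choose $B_{k2}\subset R_1$ with fraction $a_{2k}/(1-a_{1k})$ of $\mu(R_1)$; otherwise set $B_{k2}=\emptyset$. Continue in this way, and let the last piece $B_{kN}$ be the whole remaining set. A telescoping computation using $\sum_j a_{jk}=1$ shows that $\mu(B_{kN})=a_{Nk}\mu(B_k)$.

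\textbf{Main obstacle.} The only nontrivial ingredient is justifying Lyapunov's theorem in the present setting, namely that a finite vector measure whose coordinates are atomless has convex range. This is standard, but one has to make sure that atomlessness of each $\mu_i$ is the right hypothesis. It is: any set on which the vector measure had an atom would be an atom of one of the nonzero coordinates $\mu_i$.

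The remaining care is bookkeeping. The degenerate case $a_{jk}=1$ makes the remainder null, which the $\emptyset$ convention handles. One should also confirm that the sets $E_j$ are pairwise disjoint and cover $L$, which holds because the $B_{kj}$ partition each $B_k$. Finally, the equalities must hold simultaneously for all $N$ measures, which is exactly why the vector form of Lyapunov's theorem is needed rather than a coordinate-by-coordinate argument.
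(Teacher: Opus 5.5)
Your proof is correct. After passing to the common refinement, splitting each cell $B_k$ into pieces whose vector measures are the fractions $a_{jk}$ of $\mu(B_k)$ is exactly what Lyapunov's convexity theorem delivers, and your sequential peeling works because $f^0\in Z$ forces $a_{jk}\ge 0$ and $\sum_j a_{jk}=1$, so every ratio $a_{jk}/(1-\sum_{l<j}a_{lk})$ lies in $[0,1]$. The paper gives no argument of its own for this statement; it imports the result from Dvoretzky, Wald, and Wolfowitz, and your cell-by-cell Lyapunov argument is the standard proof behind that citation, so it supplies exactly what the paper takes on citation.
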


Observe that \ref{thm:3} and \ref{thm:4} imply that the sets $V_N$ and $V_N^*$ are equal if $\mu_1, \dots, \mu_N$ are finite atomless measures. Recall that $\bar{f} \in Z$, hence $\nu(\bar{f}) \in V_N$. There exists a vector of indicator functions that constitute a partition ($f = (f_1,...,f_N) \in Z$ such that $f_j = \chi_{E_j}$ and $\cup_{j \in N} E_j = L$), that gives agents the same utilities as $\bar{g}$. Therefore, a feasible deterministic allocation $\chi_{E_1}, \dots, \chi_{E_N}$ is wPE and EF.

Now, let's consider the case when agents' measures $\{\mu_1,\dots, \mu_N\}$ are not atomless. As in the case of allocating finitely many items, there is little hope that an envy-free allocation is deterministic. However, I can show that wPE and EF allocation can be taken to have finite support. Using \ref{thm:3} together with the existence theorem \ref{thm:existence}, I know that wPE and EF allocation can be taken to be a family of simple functions. In the next theorem, in the spirit of the Birkhoff-von-Neumann theorem, I show that every such family of simple functions can be represented as a convex combination of families of indicator functions.

\begin{thm}
    Let $\{f^0_j\}_{j \in J}$ be a family of simple functions that is a partition of unity. Then there exist finitely many families of indicator functions $\{f^*_{k,j}\}_{k \in K,j \in J}$ such that $f^*_{k,j} \geq 0, \sum_{j \in J}f^*_{k,j} = 1$ for every $k$ and such that $(f_1^0, \dots, f_N^0)$ is in the convex hull of $\{(f_{1,1}^*, \dots f_{1,N}^*), \dots, (f_{K,1}^*, \dots f_{K,N}^*)\}$:
    \[ 
        \begin{bmatrix}
        f_1^0 \\
        f_2^0 \\
        \vdots \\
        f_N^0
        \end{bmatrix}
        =
        \sum_{k=1}^{K} a_k
        \begin{bmatrix}
        f_{1,k}^* \\
        f_{2,k}^* \\
        \vdots \\
        f_{N,k}^*
        \end{bmatrix}
    \]
    where $a_k \geq 0, \sum_{k \in K} a_k = 1$.
\end{thm}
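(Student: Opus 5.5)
Take a common refinement and reduce the statement to a finite-dimensional decomposition. Each $f^0_j$ is simple, so it takes finitely many values on finitely many measurable sets. Intersect the level sets of all the $f^0_j$ to get a finite measurable partition $\{E_1,\dots,E_M\}$ of $L$ on whose cells every $f^0_j$ is constant, say $f^0_j = \sum_{m=1}^M c_{m,j}\chi_{E_m}$. Since $0\le f^0_j$ and $\sum_j f^0_j = 1$ pointwise, for each cell $m$ the vector $c_m=(c_{m,1},\dots,c_{m,N})$ lies in the simplex $\Delta^{N-1}$. The problem is now: write the $M$-tuple of simplex points $(c_1,\dots,c_M)$ as a convex combination of $M$-tuples of vertices $(e_{j_1},\dots,e_{j_M})$. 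Each such vertex tuple corresponds to the family of indicator functions $f^*_{j}=\sum_{m: j_m=j}\chi_{E_m}$, which is a genuine partition of $L$ into the sets $F_j=\bigcup_{m:j_m=j}E_m$.

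For the decomposition itself I would use a product-measure (independent coupling) construction, which avoids any combinatorial rounding. Index the pure allocations by maps $\sigma:\{1,\dots,M\}\to J$, and set
\[
a_\sigma=\prod_{m=1}^M c_{m,\sigma(m)}.
\]
These weights are nonnegative and sum to $\prod_m \sum_j c_{m,j}=1$. Define $f^*_{\sigma,j}=\sum_{m:\sigma(m)=j}\chi_{E_m}$. Then for $x\in E_m$,
\[
\sum_\sigma a_\sigma f^*_{\sigma,j}(x)=\sum_{\sigma:\sigma(m)=j}\prod_{m'}c_{m',\sigma(m')}=c_{m,j}=f^0_j(x),
\]
because summing out all coordinates other than $m$ gives factors equal to $1$. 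So $(f^0_1,\dots,f^0_N)$ is a convex combination of at most $N^M$ (hence finitely many) families of indicator functions, each a partition of unity. This is exactly the claim, with $K$ the set of maps $\sigma$ with $a_\sigma>0$.

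I do not expect a real obstacle; the only care needed is in the bookkeeping. First, the refinement must be genuinely finite and measurable, which holds because finitely many simple functions generate finitely many level sets. Second, the identity $\sum_j f^0_j=1$ must hold everywhere on each cell rather than just almost everywhere; if it holds only a.e., I would modify $f^0$ on a null set, which changes no agent's utility. If a smaller support is wanted, Carathéodory's theorem in the finite-dimensional space $\R^{M\times N}$ can reduce $K$ to at most $M(N-1)+1$ terms, but that is not needed for the statement.
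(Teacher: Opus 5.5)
Your proof is correct, and it takes a genuinely different route from the paper's. For two agents, the paper sorts the coefficients $\alpha_1\le\dots\le\alpha_n$ and writes $f^0_1$ as a nested ``staircase'' of indicators $\chi_{L\setminus(E_1\cup\dots\cup E_i)}$, with complementary indicators for $f^0_2$. For general $N$ (in the appendix), it indexes all assignments of cells to agents by a coarser partition with $r$ blocks together with an ordered choice of $r$ agents. This is exactly your set of maps $\sigma$, since $\sum_r S(M,r)P(N,r)=N^M$. It then sets up the $0$--$1$ linear system $Pa=f$, argues through Farkas' lemma and an iterative inequality-counting argument that a nonnegative solution exists, and adds a separate step showing the weights sum to one.

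Your product weights $a_\sigma=\prod_m c_{m,\sigma(m)}$ are simply an explicit nonnegative solution of that same system. So you bypass Farkas' lemma and the counting entirely, and the normalization is automatic from $\sum_j c_{m,j}=1$. The result is shorter and fully rigorous for every $N$.

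What the paper's two-agent construction buys is a small, nested support. It is the \emph{comonotone} coupling: draw one uniform $U$ and give cell $m$ to agent~1 iff $U<\alpha_m$. This uses at most $n+1$ partitions for $n$ cells, whereas your \emph{independent} coupling can charge up to $N^M$ partitions. If you want a small support without invoking Carath\'eodory, the single-uniform idea extends to any $N$. For each cell $m$, split $[0,1)$ into consecutive intervals of lengths $c_{m,1},\dots,c_{m,N}$, and assign cell $m$ to the agent whose interval contains $U$. The assignment is constant between the at most $M(N-1)$ breakpoints, giving at most $M(N-1)+1$ indicator families, which matches your Carath\'eodory bound.
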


\begin{proof}
    Here, I present a nice and intuitive proof for the case of two agents. WLOG, the two simple functions can be defined on the same partition $\{E_i\}_{i \in n}$ of $L$. Let $f^0_1 = \sum_{i = 1}^n \alpha_i \chi_{E_i}$ and $f^0_2 = \sum_{i = 1}^n \beta_i \chi_{E_i}$ such that $\alpha_ i + \beta_i = 1$. Order the coefficients for agent 1 in a non-decreasing order $\alpha_1 \leq \alpha_2 \leq \dots \leq \alpha_n$. Then 
    \[
        f^0_1 = \alpha_1 \chi_L + (\alpha_2 - \alpha_1) \chi_{L\setminus E_1} + (\alpha_3 - \alpha_2) \chi_{L \setminus (E_1 \cup E_2)} + \dots + (\alpha_n - \alpha_{n - 1}) \chi_{E_n} + (1 - \alpha_n) \chi_{\emptyset}
    \]
    
    Now I claim that
    \[
        f^0_2 = \alpha_1 \chi_{\emptyset} + (\alpha_2 - \alpha_1) \chi_{E_1} + (\alpha_3 - \alpha_2) \chi_{E_1 \cup E_2} + \dots + (\alpha_n - \alpha_{n - 1}) \chi_{L \setminus E_n} + (1 - \alpha_n) \chi_{L}
    \]
    
    Recall that for every $x \in L$, $f^0_2(x) = \beta_i$ for $E_i \ni x$. Using the form of $f^0_2$ above, observe for $x \in E_i$
    \[
        f^0_2(x) = (\alpha_{i+1} - \alpha_i) \chi_{E_1 \cup \dots \cup E_i}(x) + (\alpha_{i+2} - \alpha_{i+1}) \chi_{E_1 \cup \dots \cup E_{i+1}}(x) + \dots + (1 - \alpha_n) \chi_{L}(x) =
    \]
    \[
        = 1 - \alpha_i = \beta_i
    \]

    Note that every pair of indicator functions is defined on a partition, and $\chi_{L \setminus (E_1 \cup \dots \cup E_m)} + \chi_{E_1 \cup \dots \cup E_m} = \chi_L$. Trivially, a telescoping sum $\sum_{i = 1}^{n+1} (\alpha_i - \alpha_{i - 1})$ with $\alpha_0 = 0$ and $\alpha_{n+1} = 1$ is equal to $\alpha_{n+1} = 1$.

    The proof of the generalized version of this theorem for an arbitrary finite number of agents is in the appendix.
\end{proof}

\section{Conclusion}
This paper develops a unified framework establishing the existence of weakly Pareto efficient and envy-free allocations in a broad class of environments. Random allocations are modeled as probability measures on a compact metric space, while agents’ preferences are represented by continuous, concave utility functions defined over the space of probability measures.

The generality of the framework nests several classic existence results in discrete allocation settings with indivisible goods, including the school choice and house allocation problems. The same technique extends naturally to divisible environments such as cake-cutting and land division, where I further show that even under non-atomless preferences, allocation is question can be chosen to have finite support.

Finally, the framework is applied to new classes of allocation problems that lie outside existing models, including the allocation of indivisible goods and services over time and the allocation of differentiated commodities. These applications highlight the flexibility of the approach and its ability to accommodate environments that cannot be captured within standard allocation frameworks.

\printbibliography[heading=bibnumbered]

@article{Foley1967,
  author = {Foley, Duncan K.},
  title = {Resource Allocation and the Public Sector},
  journal = {Yale Economic Essays},
  year = {1967},
  volume = {7},
  pages = {45--90}
}

@article{HyllandZeckhauser1979,
  author = {Hylland, Aanund and Zeckhauser, Richard},
  title = {The efficient allocation of individuals to positions},
  journal = {Journal of Political Economy},
  year = {1979},
  volume = {91},
  pages = {293--313}
}

@article{Berliant1985,
  author  = {Marcus Berliant},
  title   = {An equilibrium existence result for an economy with land},
  journal = {Journal of Mathematical Economics},
  volume  = {14},
  number  = {1},
  pages   = {53--56},
  year    = {1985},
  doi     = {10.1016/0304-4068(85)90026-6}
}

@article{Weller1985,
  author = {Weller, Daniel},
  title = {Fair allocation of indivisible goods and criteria of justice},
  journal = {Econometrica},
  year = {1985},
  volume = {53},
  pages = {1121--1140}
}

@article{Akin1995,
  author = {Akin, Ethan},
  title = {Vilfredo Pareto cuts the cake},
  journal = {Journal of Mathematical Economics},
  year = {1995},
  volume = {24},
  pages = {23--44}
}

@article{Varian1974,
  author = {Varian, Hal R.},
  title = {Equity, Envy, and Efficiency},
  journal = {Journal of Economic Theory},
  year = {1974},
  volume = {9},
  pages = {63--91}
}

@misc{EcheniqueEtAl,
  author = {Echenique, Federico and Miralles, Antonio and Zhang, Jun},
  title = {Fairness and Efficiency for Allocations with Participation Constraints},
  note = {\url{https://authors.library.caltech.edu/102178/}},
  year = {2020}
}

@article{BogomolnaiaMoulin2001,
  author = {Bogomolnaia, Anna and Moulin, Herve},
  title = {A New Solution to the Random Assignment Problem},
  journal = {Journal of Economic Theory},
  year = {2001},
  volume = {100},
  number = {2},
  pages = {295--328}
}

@misc{HalpernShah2021,
  author = {Halpern, Daniel and Shah, Nisarg},
  title = {Fair Division with Subsidy},
  note = {\url{https://arxiv.org/abs/2105.10064}},
  year = {2021}
}

@misc{BudishEtAl2013,
  author = {Budish, Eric and Che, Yeon-Koo and Kojima, Fuhito and Milgrom, Paul},
  title = {Designing Random Allocation Mechanisms: Theory and Applications},
  note = {Working paper},
  year = {2013}
}

@inproceedings{ColeTao2019,
  author = {Cole, Richard and Tao, Xin},
  title = {Mechanism Design for Fair Division: Allocating Divisible Items without Payments},
  journal = {Proceedings of the 2019 ACM Conference on Economics and Computation},
  year = {2019},
  pages = {385--386}
}

@article{Steinhaus1948,
  author = {Steinhaus, Hugo},
  title = {The problem of fair division},
  journal = {Econometrica},
  year = {1948},
  volume = {16},
  pages = {101--104}
}

@article{BerliantThomsonDunz1991,
  author = {Berliant, Marcus and Thomson, William and Dunz, Karl},
  title = {On the fair division of a heterogeneous commodity},
  journal = {Journal of Mathematical Economics},
  year = {1991},
  volume = {20},
  pages = {113--122}
}

@article{HusseinovSagara2013,
  author = {Husseinov, Rauf and Sagara, Norihito},
  title = {Equilibrium and efficiency in fair division with a measure space of agents},
  journal = {Social Choice and Welfare},
  year = {2013},
  volume = {41},
  pages = {45--64}
}

@article{Woodall1980,
  author = {Woodall, Douglas R.},
  title = {Dividing a cake fairly},
  journal = {Mathematical Gazette},
  year = {1980},
  volume = {64},
  pages = {475--477}
}

@book{Alfsen1971,
  author = {Alfsen, Erik M.},
  title = {Compact Convex Sets and Boundary Integrals},
  publisher = {Springer},
  year = {1971},
  note = {\href{https://link.springer.com/book/10.1007/978-3-642-65009-3}{Springer Link}}
}

@book{Border1985,
  author = {Border, Kim C.},
  title = {Fixed Point Theorems with Applications to Economics and Game Theory},
  publisher = {Cambridge University Press},
  year = {1985},
  note = {\url{https://www.amazon.com/Fixed-Point-Theorems-Applns-Border/dp/0521388082}}
}

@article{DvoretzkyWaldWolfowitz1951,
  author = {Dvoretzky, Aryeh and Wald, Abraham and Wolfowitz, Jacob},
  title = {Elimination of Randomization in Certain Statistical Decision Procedures and Zero-Sum Two-Person Games},
  journal = {Annals of Mathematical Statistics},
  year = {1951},
  volume = {22},
  number = {1},
  pages = {1--21}
}

@book{Hildenbrand1974,
  author    = {Werner Hildenbrand},
  title     = {Core and Equilibria of a Large Economy},
  series    = {Princeton Studies in Mathematical Economics},
  number    = {5},
  publisher = {Princeton University Press},
  address   = {Princeton, NJ},
  year      = {1974},
  isbn      = {978-0691645766},
  note      = {Reprinted in the Princeton Legacy Library, 2016}
}

@article{MirallesPycia2021,
  author  = {Antonio Miralles and Marek Pycia},
  title   = {Foundations of Pseudomarkets: Walrasian Equilibria for Discrete Resources},
  journal = {Journal of Economic Theory},
  volume  = {196},
  pages   = {105303},
  year    = {2021},
  doi     = {10.1016/j.jet.2021.105303}
}

@article{MasColell1975,
  author  = {Andreu Mas-Colell},
  title   = {A Model of Equilibrium with Differentiated Commodities},
  journal = {Journal of Mathematical Economics},
  volume  = {2},
  number  = {2},
  pages   = {263--295},
  year    = {1975},
  doi     = {10.1016/0304-4068(75)90028-2},
  note    = {Also available via RePEc / ScienceDirect}
}

@article{PaznerSchmeidler1974,
  author  = {Elisha A. Pazner and David Schmeidler},
  title   = {A Difficulty in the Concept of Fairness},
  journal = {The Review of Economic Studies},
  volume  = {41},
  number  = {3},
  pages   = {441--443},
  year    = {1974},
  doi     = {10.2307/2296762},
  url     = {https://hdl.handle.net/10.2307/2296762}
}

\section{Appendix}

\begin{prop}
    Let $(X,d)$ be a compact metric space. The set of all integer-valued Borel measures on $X$, that are uniformly bounded by an integer $\alpha$, denoted by $\mathcal{A}$, is closed in weak-* topology.
\end{prop}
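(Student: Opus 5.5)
We need to show that $\mathcal{A}=\{a \text{ Borel measure on } X: a(B)\in\{0,1,2,\dots\}\ \forall B,\ a(X)\le\alpha\}$ is weak-* closed. The plan is to first characterize $\mathcal{A}$ structurally, and then show that this structure is preserved under weak-* limits.

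\textbf{Step 1: structure of $\mathcal{A}$.} Every $a\in\mathcal{A}$ has the form $a=\sum_{i=1}^m \delta_{x_i}$ with $m\le\alpha$ and points $x_i\in X$, possibly repeated; conversely, every such sum lies in $\mathcal{A}$. To see this, note that $a$ is a finite Borel measure on a compact metric space, hence regular. Its support $S$ is finite: if $S$ contained $\alpha+1$ distinct points, disjoint open balls around them would each carry mass at least $1$ (positive and integer-valued), contradicting $a(X)\le\alpha$. Moreover $a(X\setminus S)=0$, so $a=\sum_{x\in S}a(\{x\})\delta_x$ with integer weights.

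\textbf{Step 2: closedness along sequences.} Since $X$ is compact metric, the set of measures of total mass at most $\alpha$ is weak-* metrizable, so it suffices to check sequential closedness.

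Take $a_n\to a$ weak-*. Testing against $1$ gives $a_n(X)\to a(X)$. Because the values $a_n(X)$ are integers in $[0,\alpha]$, the mass stabilizes: passing to a subsequence, $a_n(X)=m$ for all $n$, and hence $a(X)=m\le\alpha$.

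Write $a_n=\sum_{i=1}^m\delta_{x_i^n}$. By compactness of $X^m$, extract a further subsequence along which $x_i^n\to x_i$ for every $i$. For any continuous $f$,
\[
\int f\,da_n=\sum_{i=1}^m f(x_i^n)\longrightarrow \sum_{i=1}^m f(x_i).
\]
Weak-* limits are unique, since continuous functions determine finite Borel measures on a metric space. Therefore $a=\sum_{i=1}^m\delta_{x_i}\in\mathcal{A}$.

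\textbf{Main obstacle.} The delicate point is Step 1, namely justifying that an integer-valued measure is purely atomic with finitely many atoms. This requires the support argument, and in particular the fact that $a(X\setminus S)=0$, which holds because $X$ is second countable. Once this representation is in hand, Step 2 is a routine compactness argument.

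The proposition also yields the compactness claim made in the main text. By Banach--Alaoglu, or equivalently Prokhorov on a compact space, the set of measures with mass at most $\alpha$ is weak-* compact, and $\mathcal{A}$ is a closed subset of it.
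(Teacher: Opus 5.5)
Your proof is correct. It shares the paper's skeleton: an element of $\mathcal{A}$ has at most $\alpha$ atoms, and compactness lets you extract a convergent subnet or subsequence of atom locations. Where you differ is in how you identify the limit.

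The paper keeps weighted atoms $a_i(x^i_k)$ at up to $\alpha$ locations. It then needs two auxiliary propositions and a case analysis:
\begin{itemize}
\item bump functions supported away from the limit points show $\mathrm{supp}(a)\subset\{x_1,\dots,x_\alpha\}$;
\item test functions equal to $1$ on one ball and $0$ on the others show that the limiting weights are integers;
\item coinciding limit points are handled by a separate remark.
\end{itemize}
You first freeze the total mass $m$, using that a convergent integer sequence is eventually constant, and write each $a_n$ as $m$ unit Dirac masses with repetition. All three issues then disappear. The limit is $\sum_{i=1}^m\delta_{x_i}$ in one line by uniqueness of weak-* limits, with integrality and collisions absorbed into multiplicities. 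You also prove the structural fact ($|S|\le\alpha$ and $a(X\setminus S)=0$) that the paper merely asserts.

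The one place your route is heavier than needed is the reduction to sequences. Strictly, it requires noting that the set of nonnegative measures of mass at most $\alpha$ is weak-* closed, so that the closure of $\mathcal{A}$ lies in this metrizable set. You can drop the reduction altogether, because your argument runs verbatim for nets. A convergent net of integers $a_i(X)$ is eventually constant, and a net in the compact space $X^m$ has a convergent subnet.
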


\begin{prop}
Let $\{x_1,\dots,x_\alpha\}$ be a finite collection of point in $X$. Let $a$ be a measure on $X$. If $\operatorname{supp}(a)\not\subset \{x_1,\dots,x_\alpha\}$, then there exists an open nbhd of a point $N_{\bar x}$ such that
\[
N_{\bar x} \subset \left( B_{\varepsilon_1}(x_1)\cup \cdots \cup B_{\varepsilon_\alpha}(x_\alpha) \right)^C
\]
such that $a(N_{\bar x})>0$.
\end{prop}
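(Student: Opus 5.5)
The statement is a point-set fact about supports. Let $a$ be a (nonnegative Borel) measure on the compact metric space $X$ whose support $\operatorname{supp}(a)$ is not contained in $\{x_1,\dots,x_\alpha\}$. I would produce radii $\varepsilon_1,\dots,\varepsilon_\alpha>0$ and an open neighbourhood $N_{\bar x}$ of some point $\bar x$, avoiding the balls, with $a(N_{\bar x})>0$. The radii are not given in advance, so part of the task is to choose them small enough.

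\textbf{Step 1: choose $\bar x$.} Since $\operatorname{supp}(a)\not\subset\{x_1,\dots,x_\alpha\}$, pick $\bar x\in\operatorname{supp}(a)$ with $\bar x\neq x_i$ for every $i$. Set
\[
r=\min_{1\le i\le\alpha} d(\bar x,x_i)>0.
\]
This minimum is positive because it is taken over finitely many positive numbers.

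\textbf{Step 2: fix the radii and the neighbourhood.} Take $\varepsilon_i=r/3$ for every $i$, and let $N_{\bar x}=B_{r/3}(\bar x)$. Suppose $y\in N_{\bar x}\cap B_{\varepsilon_i}(x_i)$ for some $i$. The triangle inequality then gives
\[
d(\bar x,x_i)\le d(\bar x,y)+d(y,x_i)<\tfrac{2r}{3}<r,
\]
which contradicts the definition of $r$. Hence $N_{\bar x}\subset\bigl(B_{\varepsilon_1}(x_1)\cup\cdots\cup B_{\varepsilon_\alpha}(x_\alpha)\bigr)^C$.

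\textbf{Step 3: positivity of the mass.} The support is the set of points every open neighbourhood of which has positive measure. In a separable metric space this agrees with the complement of the largest open null set. That open null set exists because a countable base makes any union of open null sets a countable union, hence null. Since $\bar x\in\operatorname{supp}(a)$ and $N_{\bar x}$ is an open neighbourhood of $\bar x$, we get $a(N_{\bar x})>0$.

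\textbf{Main obstacle.} The only subtle point is Step 3, namely that the support is well defined, so that "$\bar x$ in the support" really does force every neighbourhood to have positive mass. Compact metric spaces are second countable, and I would cite that to settle it. If the intended reading is instead that the $\varepsilon_i$ are prescribed, the same construction still works. Choose $\bar x\in\operatorname{supp}(a)\setminus\bigcup_i \overline{B}_{\varepsilon_i}(x_i)$, provided such a point exists, and take $N_{\bar x}$ to be a ball around $\bar x$ that avoids these closed balls. Under this reading I would state the lemma with radii small enough: for $\varepsilon_i<r$, the point $\bar x$ chosen in Step 1 lies outside every $\overline{B}_{\varepsilon_i}(x_i)$, so the argument goes through.
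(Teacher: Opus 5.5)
Your proof is correct and takes essentially the paper's route. The paper also picks $\bar x\in\operatorname{supp}(a)$ distinct from every $x_k$, separates it from $\{x_1,\dots,x_\alpha\}$ by disjoint open sets $N_{\bar x}$ and $O\supset\{x_1,\dots,x_\alpha\}$ using the Hausdorff property, and gets $a(N_{\bar x})>0$ from the definition of the support. You carry out that separation concretely with the metric (radii $r/3$ and the triangle inequality), which also makes explicit the choice of the $\varepsilon_i$ that the paper leaves implicit inside $O$.
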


\begin{proof}
If $\operatorname{supp}(a)\not\subset \{x_1,\dots,x_\alpha\}$, there exists $\bar x\neq x_k$ for any $k\in\{1,\dots,\alpha\}$ such that every neighborhood of $\bar x$ has a positive measure. Since $X$ is Hausdorff, there exist open $N_{\bar x}$ and $O$ such that $\bar x\in N_{\bar x}$ and $\{x_1,\dots,x_\alpha\}\subset O$ such that $N_{\bar x}\cap O=\emptyset$.
\end{proof}

\begin{prop}
For every open set $O\subset X$, there exists $f\in C(X)$ such that $f\not\equiv 0$ and $\operatorname{supp}(f)\subset O$.
\end{prop}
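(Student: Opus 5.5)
We need to prove: for every open set $O\subset X$ in a compact metric space, there exists a continuous $f$ that is not identically zero and whose support lies in $O$. This only makes sense for nonempty $O$, which we assume throughout, since it is used for $O=N_{\bar x}$, a neighbourhood of a point. We will use the metric to build an explicit bump function (an Urysohn-type construction), so no abstract Urysohn lemma is needed.

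\emph{Step 1 (choose a ball).} Pick a point $\bar x\in O$. Because $O$ is open in the metric space $(X,d)$, there is $r>0$ with $B_{r}(\bar x)=\{y: d(y,\bar x)<r\}\subset O$.

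\emph{Step 2 (define the bump).} Set
\[
f(y)=\max\Bigl\{0,\; \tfrac{r}{2}-d(y,\bar x)\Bigr\}.
\]
The map $y\mapsto d(y,\bar x)$ is $1$-Lipschitz, and both affine maps and $\max$ preserve continuity. Hence $f\in C(X)$, and $f$ is nonnegative and bounded; boundedness is automatic anyway since $X$ is compact. Moreover $f(\bar x)=r/2>0$, so $f\not\equiv 0$.

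\emph{Step 3 (support).} We have
\[
\{f\neq 0\}=\{y: d(y,\bar x)<r/2\}.
\]
Its closure lies in the closed ball $\{y: d(y,\bar x)\le r/2\}$, because that closed ball is closed and contains the open one. The closed ball in turn lies in $B_r(\bar x)\subset O$. So $\operatorname{supp}(f)\subset O$.

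The only delicate point is Step 3: the support is the \emph{closure} of the nonvanishing set. Taking the bump radius equal to $r$ would only put the support inside the closed ball of radius $r$, which may fail to lie in $O$. Shrinking the radius to $r/2$ removes this issue. If the intended use needs $f\ge 0$, for example to pair with a measure $a$ satisfying $a(N_{\bar x})>0$, the construction already delivers that. One can also arrange $f>0$ exactly on $B_{r/2}(\bar x)$, so that $\int f\,da>0$ whenever $a$ charges that smaller ball.
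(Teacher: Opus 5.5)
Your proof is correct. It reaches the conclusion by a slightly different, more explicit route than the paper.

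The paper works with the complement $F=O^C$. It forms the closed enlargement $F^{\varepsilon}=\{x: d(x,F)\le\varepsilon\}$ and notes that $\{x: d(x,F)>\varepsilon\}$ is nonempty for small $\varepsilon$. It then invokes Urysohn's lemma to get $f$ equal to $1$ there and $0$ on the $\varepsilon/2$-enlargement of $F$, so that $\operatorname{supp}(f)\subset\{d(\cdot,F)\ge\varepsilon/2\}\subset O$. You instead centre an explicit $1$-Lipschitz bump at one point of $O$, so no separation lemma is needed.

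Your version gains three things over the paper's:
\begin{enumerate}
\item It is self-contained and given by a formula.
\item It sidesteps the paper's slip of calling $\{d(\cdot,O^C)>\varepsilon\}$ closed. That set is open, so one must first pass to its closure, contained in $\{d(\cdot,O^C)\ge\varepsilon\}$, before Urysohn applies.
\item Centring at $\bar x\in\operatorname{supp}(a)$, with $f\ge 0$ and $f>0$ on $B_{r/2}(\bar x)$, gives $\int f\,da>0$ directly. That is what the downstream argument in the appendix actually uses. The bare conclusion $f\not\equiv 0$ does not give it, and the paper's $\varepsilon$ is not tied to $\bar x$.
\end{enumerate}

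The paper's function is equal to $1$ on a large ``deep interior'' of $O$, which is more than the statement requires. You are also right to make $O\neq\emptyset$ explicit; the paper uses it silently when asserting that the deep-interior set is nonempty.
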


\begin{proof}
Take an open subset $O\subset X$, $O^C$ is closed. Define
\[
F_\varepsilon = \{x\in X \mid d(x,F)<\varepsilon\}
\qquad\text{and}\qquad
F^\varepsilon = \{x\in X \mid d(x,F)\le \varepsilon\}.
\]
There exists $\varepsilon>0$ such that $(O^{C\varepsilon})^C$ is non-empty. This set is closed and is fully contained in $O$. Using Urysohn's lemma, there exists a continuous $f$ such that
\[
f|_{(O^{C\varepsilon})^C} \equiv 1
\qquad\text{and}\qquad
f|_{O^C=\varepsilon/2} \equiv 0.
\]
Hence, $\operatorname{supp}(f)\subset O$.
\end{proof}

\begin{proof}
    First, observe that for any $a \in \mathcal{A}$, $|supp(a)| \leq \alpha$. Take a converging net $a_i \rightarrow^{w^*} a$, where $a_i \in \mathcal{A}$. Using one of the characterizations of the weak-* topology, it means $\sum_{k = 1}^\alpha a_i(x^i_k)f(x^i_k) \rightarrow af$ for any $f \in C(X)$. Every $a_i$ in the net has at most $\alpha$ atoms, denoted by $\{x^i_1, \dots, x^i_\alpha\}$. Since $X$ is compact, a vector-valued net $(x^i_1, \dots, x^i_\alpha)$ has a converging subnet. Denote its limit by $(x_1, \dots, x_\alpha)$. If $supp(a) \not\subset \{x_1, \dots, x_\alpha\}$, then there exists an open nbhd of a point $N_{\bar{x}}$ such that $N_{\bar{x}} \subset (B_{\epsilon_1}(x_1) \cup \dots \cup B_{\epsilon_\alpha} (x_\alpha))^C$ and $a(N_{\bar{x}}) > 0$. Take $f \in C(X)$ such that $f \not\equiv 0$ and $supp(f) \subset N_{\bar{x}}$. Then $af > 0$ but $\sum_{k = 1}^\alpha a_i(x^i_k)f(x^i_k) \rightarrow 0$.

    WLOG, $\{x_k^{i_l}\}_{l \in N} \rightarrow x_k$ and $\{x_1, \dots, x_\alpha\}$ are distinct points. Then there are $B_{\epsilon_1}(x_1), \dots, B_{\epsilon_\alpha} (x_\alpha)$ pairwise disjoint, such that $x_k^{i_l} \in B_{\epsilon_k}$ residually. Suppose $a(x_k) > 0$ and not an integer, then take $f \not\equiv 0$ such that $f \ \Big |_{B_{\epsilon_k}} \equiv 1$ and $f \ \Big |_{B_{\epsilon_j \neq k}} \equiv 0$. I obtain a contradiction: a net of integers $a_i(x^{i_l}_k)$ converges in topology of real numbers to a strictly positive non-integer.

    If there are duplicates among $\{x_1, \dots, x_\alpha\}$, I still obtain a contradiction: the net becomes $\sum_{k} a_i(x^{i_l}_k)$, but the sum of integers is an integer.
\end{proof}

\begin{lem}
    Let $X$ be a Polish space, and $T: X \rightarrow X$ a continuous operator. Define a push-forward operator $L_T: \Delta(X) \rightarrow \Delta(X)$ as $L_T(p) = p \circ T^{-1}$. Then $L$ is continuous in weak-* topology.
\end{lem}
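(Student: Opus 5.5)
The plan is to work directly with the definition of the weak-* topology on $\Delta(X)$. A net (or sequence) $p_\alpha \to p$ in this topology means exactly that $\int_X g\,dp_\alpha \to \int_X g\,dp$ for every bounded continuous $g: X \to \R$. It therefore suffices to check that $L_T$ maps convergent nets to convergent nets. Since $X$ is Polish, $\Delta(X)$ with the weak-* topology is metrizable (by the Prokhorov metric), so checking sequences would be enough; but the net argument works verbatim and avoids invoking metrizability at all.

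The key step is the change-of-variables formula for push-forward measures: for every bounded Borel $g$,
\[
\int_X g \, d(p\circ T^{-1}) = \int_X (g\circ T)\, dp .
\]
To prove it, first take $g = \mathbbm{1}_B$ for Borel $B$. Then the formula is just the definition $(p\circ T^{-1})(B) = p(T^{-1}(B))$, which makes sense because a continuous $T$ is Borel measurable. Linearity then extends it to simple functions, and uniform approximation (or monotone convergence) extends it to all bounded Borel functions.

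Now fix $g \in C_b(X)$ and a net $p_\alpha \to p$ weak-*. The function $g\circ T$ is continuous, being a composition of continuous maps, and it is bounded by $\sup|g|$, so $g\circ T \in C_b(X)$. Weak-* convergence then gives $\int (g\circ T)\,dp_\alpha \to \int (g\circ T)\,dp$, and by the change-of-variables formula this says $\int g\, dL_T(p_\alpha) \to \int g\, dL_T(p)$. Since $g$ was arbitrary, $L_T(p_\alpha) \to L_T(p)$, so $L_T$ is continuous.

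Along the way one should also check that $L_T$ really maps into $\Delta(X)$: $p\circ T^{-1}$ is countably additive because preimages commute with countable disjoint unions, and $(p\circ T^{-1})(X) = p(X) = 1$. The only point needing care is the measurability and change-of-variables step; everything else is immediate. In the paper's application $T = T_{i\leftrightarrow j}$ is a homeomorphism of the compact metric space $X$, so in particular it is continuous and the lemma applies directly to the exchange operators $L_{i\leftrightarrow j}$.
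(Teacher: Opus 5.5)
Your proof is correct and is essentially the paper's argument. Both rest on the change-of-variables identity $\int g\,d(p\circ T^{-1})=\int (g\circ T)\,dp$ together with the observation that $g\circ T\in C_b(X)$. The only difference is that you argue with nets, so you do not need the metrizability of $\Delta(X)$, and hence the Polish hypothesis, which the paper invokes only to reduce to sequences.
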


\begin{proof}
    Since $X$ is a Polish space, $\Delta(X)$ is metrizable and sequential continuity implies continuity. Let $\lambda_n \rightarrow^{w^*} \lambda$, then $(\lambda_n \circ T) f = \lambda_n (f\circ T) \rightarrow \lambda (f\circ T) = (\lambda \circ T^{-1}) f$ for all $f \in C_b(X)$. The equalities follow from the substitution lemma and the continuity of a composition of two continuous functions.
\end{proof}

Vietoris topology on $\mathcal{B}$, the set of measurable subsets of $L$, is defined with a standard subbase. For every non-empty open subset $U \subset L$, define $[U] = \{F \in \mathcal{B}: F \cap U \neq \emptyset\}$, and define $\langle U \rangle = \{F \in \mathcal{B} : F \subset U\}$. The base for this topology is all sets of the form $\langle U, V_1, \dots, V_n \rangle = \langle U \rangle \cap \bigcap_{i = 1}^n [V_n]$, where $U, V_1, \dots V_n$ are non-empty open subsets of $L$.

Recall that $g: L \times \mathcal{B} \rightarrow [0,1]$ is defined as follows
    \[
        g(y,F) = 
        \begin{cases}
        1, & \text{if } y \in F \\
        0, & \text{otherwise}
        \end{cases}
    \]

\begin{lem}
    $g: L \times \mathcal{B} \rightarrow [0,1]$ is measurable with respect to the product Borel sigma algebra on $L \times \mathcal{B}$.
\end{lem}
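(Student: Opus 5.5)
The plan is to show that the diagonal-type set
\[
E=\{(y,F)\in L\times\mathcal{B} : y\in F\}
\]
belongs to the Borel $\sigma$-algebra of the product topology on $L\times\mathcal{B}$. Here $\mathcal{B}$ carries the Vietoris-type topology described just above, generated by the subbase sets $[U]$ and $\langle U\rangle$. Since $g=\chi_E$ takes only the values $0$ and $1$, measurability of $g$ is the same as $E$ being Borel. It is therefore enough to exhibit $E$, or its complement, through countably many set operations on open sets of $L\times\mathcal{B}$.

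\emph{Step 1: sections.} For fixed $y$, the Hausdorff property of $L$ makes $L\setminus\{y\}$ open. Hence
\[
\{F: y\notin F\}=\langle L\setminus\{y\}\rangle
\]
is open in $\mathcal{B}$, and so every $y$-section of $E^c$ is open. For fixed $F\in\mathcal{B}$, the $F$-section of $E$ is $F$ itself, which is Borel in $L$. This gives separate measurability in each variable.

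\emph{Step 2: joint measurability.} Separate measurability alone is not sufficient, so the next step is to upgrade it using the countability assumptions on $L$. Fix a countable dense set $D=\{d_m\}$. At each $d_m$, first countability gives a countable decreasing local base $\{U_{m,k}\}_k$ of open sets. I would then use the rectangles
\[
U_{m,k}\times\langle L\setminus \overline{U_{m,k+1}}\rangle,
\]
each of which is open in the product topology. The aim is to show that $E^c$ equals their countable union, intersected if necessary with the closed set $\{(y,F): y\notin \overline{F}\}$. The latter set is controlled through the sets $[U]$, via
\[
\{(y,F): y\in\overline F\}=\bigcap_{m,k}\Big(\big(L\setminus \overline{U_{m,k}}\big)\times\mathcal{B}\ \cup\ U_{m,k}\times[U_{m,k}]\Big).
\]
Both descriptions use countably many product-measurable rectangles, so each side is Borel. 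The remaining set $\{(y,F): y\in\overline F\setminus F\}$ would then be handled by applying the same rectangle construction to the complement $L\setminus F$.

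\emph{Main obstacle.} The delicate point is the passage from $\overline F$ to $F$ for a general Borel $F$. The subbase sets $[U]$ do not distinguish $F$ from its closure, whereas the sets $\langle U\rangle$ do, but only pointwise. I would first try to close this gap through the permutation-invariant structure of $X$: for a partition $(F_1,\dots,F_N)$, the condition $y\notin F_j$ is equivalent to $y\in\bigcup_{k\neq j}F_k$. This reduces the problem to showing that $\{(y,F): y\in\overline F\}$ and the sets $\langle\cdot\rangle$ generate $E$ on the relevant subspace, and I expect this to be where the real work lies.
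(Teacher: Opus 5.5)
Your proposal stops where the proof would have to happen, and the gap you call the ``main obstacle'' cannot be closed. Step 2 contains several slips, all repairable:
\begin{itemize}
\item The rectangles should be $U_{m,k+1}\times\langle L\setminus\overline{U_{m,k}}\rangle$. As written, $(y,\{y\})$ with $y\in U_{m,k}\setminus\overline{U_{m,k+1}}$ lies in your rectangle but belongs to $E$.
\item In the closure formula, $L\setminus\overline{U_{m,k}}$ must be $L\setminus U_{m,k}$. Otherwise every $(y,L)$ with $y\in\partial U_{m,k}$ is wrongly excluded.
\item The set $\{(y,F):y\notin\overline F\}$ is open, not closed.
\item Local bases at a countable dense set need not form a base of a separable first countable space (the Sorgenfrey line is a counterexample), so you need second countability.
\end{itemize}
Even after these repairs, you can at best show that $\{(y,F):y\in\overline F\}$ is Borel. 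Applying the same construction to complements gives $\{(y,F):y\in\overline{L\setminus F}\}$. Together these decide membership off $\partial F$, but say nothing about which points of $\partial F$ lie in $F$, and that is the entire question for a general Borel $F$. The partition idea does not help: $y\in\bigcup_{k\neq j}F_k$ is the same membership problem for the other blocks, and in any case $g$ is defined on $L\times\mathcal{B}$, not on $X$.

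In fact the statement is false in the paper's generality, whatever topology is put on $\mathcal{B}$. Take $L=[0,1]$ and any $\sigma$-algebra $\mathcal{M}$ on $\mathcal{B}$, and suppose $E=\{(y,F):y\in F\}$ lies in $\mathcal{B}(L)\otimes\mathcal{M}$.
\begin{itemize}
\item Then $E$ lies in the $\sigma$-algebra generated by countably many rectangles $A_n\times\mathcal{V}_n$.
\item Set $\Phi(F)=(1_{\mathcal{V}_n}(F))_n\in\{0,1\}^{\mathbb{N}}$. Then $E=(\mathrm{id}\times\Phi)^{-1}(E')$ for some Borel $E'\subseteq L\times\{0,1\}^{\mathbb{N}}$.
\item So every Borel $F\subseteq[0,1]$ equals the section $\{y:(y,\Phi(F))\in E'\}$.
\item If $E'\in\Sigma^0_\alpha$, every such section is $\Sigma^0_\alpha$. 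This contradicts the fact that the Borel hierarchy on $[0,1]$ does not stop at any countable level.
\end{itemize}
Since $[0,1]$ is second countable, the Borel $\sigma$-algebra of the product topology equals $\mathcal{B}(L)\otimes\mathcal{M}$, so both readings of ``product Borel $\sigma$-algebra'' fail.

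The paper's own proof breaks at the very point you isolated. Its identity $\{(y,F):y\notin F\}=\bigcap_n\bigcup_{y\in L}N^y_n(y)\times\langle L\setminus\{y\}\rangle$ is false. With $N^y_n$ the $1/n$-balls, the right-hand side is $\{(z,F):z\in\overline{L\setminus F}\}$, which contains $(1/2,[0,1/2])\in E$.

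What your reasoning does establish is the closed-set version. If $L$ is regular and $F$ ranges over closed sets, then $y\notin F$ gives disjoint open sets $U\ni y$ and $W\supseteq F$, and $U\times\langle W\rangle$ misses $E$, so $E$ is closed. So a repair must restrict the admissible pieces; no sharper argument can prove the lemma as stated.
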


\begin{proof}
    $g^{-1}(1) = \{(y,F) : y \in L, F \in \mathcal{B}, y \in F\}$ is measurable, since  it is a complement of a measurable set $\{(y,F) : y \in L, F \in \mathcal{B}, y \in F\}^C = \{(y,F): y \in F^C\}$. Observe
    \[
        \{(y,F): y \in F^C\} = \bigcap_{n \in \mathbb{N}} \bigcup_{y \in L} N^y_n (y) \times A(y)
    \]

    where $A(y) = \{F \in \mathcal{B} : y \in F^C\}$, and $\{N^y_n\}_{n 
    \in \N}$ is a countable base of open nbhds of $y$. $A(y)$ is open in Vietoris topology, since $A(y) = \langle L/y \rangle$, where $L/y$ is open since the topology on $L$ is Hausdorff.
\end{proof}

\begin{thm}
    Let $\{f^0_j\}_{j \in J}$ be a family of simple functions such that $f_j \geq 0, \sum_{j \in J} f_j = 1$. Then there exist finitely many families of indicator functions $\{f^*_{k,j}\}_{k \in K,j \in J}$ such that $f^*_{k,j} \geq 0, \sum_{j \in J}f^*_{k,j} = 1$ for every $k$ and such that $(f_1^0, \dots, f_N^0)$ is in the convex hull of $\{(f_{1,1}^*, \dots f_{1,N}^*), \dots, (f_{K,1}^*, \dots f_{K,N}^*)\}$:
    \[ 
        \begin{bmatrix}
        f_1^0 \\
        f_2^0 \\
        \vdots \\
        f_N^0
        \end{bmatrix}
        =
        \sum_{k=1}^{K} a_k
        \begin{bmatrix}
        f_{1,k}^* \\
        f_{2,k}^* \\
        \vdots \\
        f_{N,k}^*
        \end{bmatrix}
    \]
    where $a_k \geq 0, \sum_{k \in K} a_k = 1$.
\end{thm}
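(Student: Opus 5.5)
The plan is to reduce the statement to a finite-dimensional, pointwise decomposition on a common partition, and then use a single "threshold" construction valid for every cell simultaneously. Since each $f^0_j$ is simple and there are finitely many agents, refining the partitions underlying each $f^0_j$ gives one common finite measurable partition $\{E_i\}_{i=1}^n$ of $L$ with
\[
f^0_j = \sum_{i=1}^n \alpha_{ij}\chi_{E_i}, \qquad \alpha_{ij}\ge 0, \qquad \sum_{j\in J}\alpha_{ij}=1 \ \text{for every } i .
\]
Thus each cell $E_i$ carries a probability vector $\alpha_{i\cdot}\in\Delta^{N-1}$. The target is a convex combination of deterministic partitions $(\chi_{E_{k,1}},\dots,\chi_{E_{k,N}})$ whose value on $E_i$ equals $\alpha_{i\cdot}$ for all $i$ simultaneously.

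For the decomposition I would use a quantile coupling that generalises the two-agent proof. For each cell $i$ define cumulative sums $s_{i,0}=0$ and $s_{i,j}=\sum_{m\le j}\alpha_{im}$, so that $[0,1]$ is split into consecutive intervals $I_{i,j}=(s_{i,j-1},s_{i,j}]$ of length $\alpha_{ij}$. Collect all breakpoints $\{s_{i,j}\}_{i,j}$ into a single sorted list $0=t_0<t_1<\dots<t_K=1$; there are at most $nN$ of them, so $K$ is finite.

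For each $k$ and each $u\in(t_{k-1},t_k]$, the index $j$ with $u\in I_{i,j}$ is constant in $u$, because no breakpoint of cell $i$ lies strictly inside $(t_{k-1},t_k)$. Call this index $j_k(i)$. Define the deterministic allocation
\[
E_{k,j}=\bigcup\{E_i : j_k(i)=j\}, \qquad f^*_{k,j}=\chi_{E_{k,j}} ,
\]
with weights $a_k=t_k-t_{k-1}$. For fixed $k$ each $E_i$ is assigned to exactly one agent, so $\{E_{k,j}\}_j$ is a measurable partition of $L$ and $\sum_j f^*_{k,j}=1$. The weights are nonnegative and telescope to $1$.

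The verification is pointwise. For $x\in E_i$,
\[
\sum_k a_k f^*_{k,j}(x)=\operatorname{Leb}\{u\in(0,1]: u\in I_{i,j}\}=\alpha_{ij}=f^0_j(x),
\]
since the intervals $(t_{k-1},t_k]$ with $j_k(i)=j$ tile $I_{i,j}$ exactly. In the two-agent case this reproduces the author's sets $E_1\cup\dots\cup E_m$ after sorting the $\alpha_i$.

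I expect the main obstacle to be bookkeeping rather than any deep difficulty: constructing the common refinement and checking that $j_k(i)$ is well defined. Both are handled by the merged-breakpoint argument, together with the half-open interval convention that deals with zero-length intervals when $\alpha_{ij}=0$. Once these are in place, no Birkhoff--von Neumann type extreme-point argument is needed, because each cell is decoupled and the threshold construction works uniformly across cells.
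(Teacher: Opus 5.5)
Your proof is correct, but it takes a different route from the paper's general $N$-agent proof in the appendix.

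The paper's columns are all $N^{n}$ assignments of the cells $E_1,\dots,E_n$ of the common refinement to agents; it counts them as $\sum_m S(n,m)P(N,m)$. It writes the decomposition as a linear system $Pa=f$ with a $0$--$1$ incidence matrix and gets a nonnegative solution from Farkas' lemma. The dual condition $P^Ty\ge 0\Rightarrow f^Ty\ge 0$ is verified by an iterative counting argument with Stirling numbers, and a separate argument checks that the coefficients sum to one.

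You instead extend to $N$ agents the explicit threshold construction that the paper uses only for two agents. Coupling all cells through one uniform variable and merging the breakpoints makes $j_k(i)$ well defined. The identity $\sum_k a_k f^*_{k,j}=\alpha_{ij}$ on $E_i$ is then a one-line length computation, and $\sum_k a_k=1$ comes for free from telescoping.

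Your route is fully constructive and needs no duality. It also bounds the support: at most $n(N-1)+1$ deterministic partitions, since there are at most $n(N-1)+2$ distinct breakpoints. The LP route shows that the full family of assignments suffices, but its dual step, as written, is much heavier than necessary. Sending each cell $i$ to an agent minimizing $y_{ji}$ picks out a column of $P$ whose inequality reads $\sum_i\min_j y_{ji}\ge 0$, and hence $f^Ty\ge\sum_i\min_j y_{ji}\ge 0$ directly. In fact $Pa=f$ has the explicit solution $a_\sigma=\prod_i\alpha_{i\sigma(i)}$, the independent coupling, and your comonotone coupling is a sparse alternative to it.
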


\begin{proof}
    For any vector of simple functions $f = (f_1, \dots f_N)$, there is a common partition $E_1, \dots E_I$ of $L$ such that every $f_j$ is constant on every $E_i$. I denote values of the functions on each element of the partition by $f_j^i$.
    
    The proof goes as follows. In the next paragraph, I construct families of indicator functions corresponding to different partitions of $L$. Then, using Farkas' lemma, I claim there exists a vector of coefficients such that the vector $f$ of simple functions is a convex combination of the chosen vectors of indicator functions.

    Recall, $N$ is the number of agents, and $I$ is the number of elements in the partition. For every $m \in \{1, \dots,\min(I,N)\}$, construct a coarser partition with $m$ elements. How many partitions for a fixed $m$? Stirling number of the second kind $S(I,m)$. For every partition, choose $m$ agents from $N$ to give this partition, counting each permutation as a different subset of agents. How many permutations? $P(N,m) = \frac{N!}{(N - m)!}$. The claim now is that there exists $\{a_k^m\}$ such that

    \[
        \begin{bmatrix}
        f_1^0 \\
        f_2^0 \\
        \vdots \\
        f_N^0
        \end{bmatrix}
        =
        \sum_{m=1}^{\min(I,N)} \sum_{k = 1}^{S(I,m) \cdot P(N,m)} a^m_k
        \begin{bmatrix}
        \chi_{A_1^{k(m)}} \\
        \vdots \\
        \chi_{A_j^{k(m)}} \\
        \vdots \\
        \chi_{A_N^{k(m)}}
        \end{bmatrix}
    \]

Let me highlight important characteristics of the sets $(A_1^{k(m)}, \dots, A_N^{k(m)})$. For every $m,k,j$, $A_j^{k(m)} = \bigcup E_i$ for some subset of $(1, \dots, I)$, possibly empty. For every $m$, for every partition in $S(I,m)$ at most $m$ out of $N$ $A_j^{k(m)}$ are non-empty.

I establish the existence of non-negative coefficients $\{a^m_k\}$ that sum to one using Farkas' lemma. Note that in essence, one needs to prove that a system of linear equations $Pa = f$ admits a non-negative solution, where $a = \{a^m_k\}_{m \in M, k \in K}$, $f = \{f^i_j\}_{j \in J, i \in I}$ and $P$ is a matrix of zeros and ones characterized below. I will also show below that $\sum_{m \in M} \sum_{k \in K} a^m_k = 1$ follows from the partition of unity constraint on $f$: $\sum_{j \in J} f^i_j = 1 \ \forall i \in I$.

The rows of $P$ correspond to sets $E_i$ and represent how many times a set $E_i$ gets assigned to a particular agent. The index of a row is $(j,i)$, and there are $J \times I$ rows in total. The entry $(j,i)$ has 1 if in a column partition agent $j$ is allocated a set $E_i$, otherwise it is zero.

The columns represent the assignment of the sets of the original partition $\{E_1, \dots, E_N\}$ in different partitions. The columns are indexed by $(m, Par, Perm)$, where $m$ represents how coarse a partition is; for every $m$, $Par$ gives a particular partition with $m$ elements and $Perm$ gives an assignment of $m$ selected agents to the elements of the partition. $(Par, Perm)$ is indexed by $k(m)$. There are $\sum_{m} S(I,m) \times P(N,m)$ columns.

Let $E_i$ be one of the subsets of the original partition. How many times does the agent $j$ receive this subset in different partitions? Remove the set $E_i$ from the pool of original subsets to be assigned to agents and repeat the procedure for the rest of $E_{k, k \neq i}$. Every row of $P$ features $\sum_{m}^{min(I-1,N)} S(I-1,m) \cdot P(N, m)$ of 1s.


Every column contains $I$ number of 1s, since every column represents how all $(E_1, \dots, E_I)$ are allocated in a corresponding partition, corresponding permutation. Fix column $k(m)$, representing partition and permutation $k(m)$. For every block of $I$ rows that corresponds to agent $j$, for every row $i$, 1 represents that the set $E_i$ is in a subset that is assigned to agent j. Hence, if the set $E_i$ in partition $k(m)$ is allocated to agent $j$, there is 1 in the $i$th row of the agent $j$' block of rows, and there are zeros in the $i$th row of every other agent's block of rows. This description of $P$ contains all the characteristics that are sufficient for the proof of the next claim.

Using Farkas' lemma, if there is no $y$ such that $P^Ty \geq 0$ and $f^Ty < 0$, then I establish the existence of the coefficients in question.

\begin{lem}
    Let $y$ be such that $P^Ty \geq 0$. Then $f^Ty \geq 0$.
\end{lem}

\begin{proof}
    Observe that $f^Ty = \sum_{i \in I} \Big(\sum_{j \in J} f^i_j y_{ji} \Big)$ is the sum of $I$ convex combinations of $\{y_{ji}\}_{j \in J}$.
    If $f^Ty < 0$, then there exists $\tilde{i}$ such that $\sum_{j \in J} f^{\tilde{i}}_j y_{j\tilde{i}} < 0$. WLOG, let's assume $\tilde{i} = 1$. Then there exists a subset of $J$ agents such that $y_{\bar{j}}^1 < 0$ for every such $\bar{j}$. Also WLOG, let's assume agent $1$ is in the subset $J$, so $y_{11} < 0$.

    The idea of the proof is to use the fact that $P^Ty \geq 0$. For every $m$, coarser partition and a particular permutation of agents $(m,k(m))$ $\sum_{j,i} P^T_{(m,k(m)),(j,i)} y_{ji} \geq 0$. Take a subset of these inequalities: select an inequality if for $(m,k(m))$, the column $(1,1)$ has 1 in it. I use this subset of inequalities to bound $|y_{11}|$ from above to ultimately show that $\sum_{i \geq 2} \Big(\sum_{j \in J} f^i_j y_{ji} \Big) \geq |y_{11}|$.

    By varying the column $(j,1)$, we can apply the same construction and upper bound to any $y_{\bar{j}1}$ (and $y_{\bar{j} \tilde{i}}$). To see this, fix $\tilde{i}$. For every allocation $(m,k(m))$ in which agent $k$ receives $E_{\tilde{i}}$, there is an allocation that is the same except $E_{\tilde{i}}$ goes to the agent $l$. Hence, the same $y_{j^\prime i^\prime}$ are picked for the agent $l$, except $y_{l\tilde{i}}$ is replaced with $y_{k\tilde{i}}$. Choosing $\tilde{i} = 1$ and picking agent $j = 1$ is, in fact, without loss.

    Since $P^Ty \geq 0$, list all the rows of $P^T$ where the cell $(1,1)$ is equal to $1$. The number of selected inequalities is equal to the number of times agent 1 gets $E_1$, $\sum_{m}^{min(I-1,N)} S(I,m) \cdot P(N, m)$.
    Every inequality will have $I-1$ of $y_{ji}$ on the left-hand side. Whenever agent $1$ gets $E_1$, no other agent gets it, so inequalities will not feature $y_{j1}, j \neq 1$.
    
    The goal is to show that
    \[
        \sum_{j \in J} f^2_j y^2_j \geq |y_1^1| - \sum_{i \geq 3} \sum_{j \in J} f^i_j y^i_j
    \]

    The procedure is iterative. The main idea is that there are enough inequalities to show first that $\sum_{j \in J} f^2_j y^2_j \geq |y^1_1| - c(I)$, where $c(I)$ is some constant. Iteratively proceed to "extract" elements indexed by $I$ from $c(I)$, getting that $\sum_{j \in J} f^2_j y^2_j \geq |y^1_1| - c(I-1) - \sum_{j \in J} f^I_j y^I_j$ and so on.

    Observe that for every $j \in J$ for the constant $c(I)$ that is independent of $j$, there exists a partition and a permutation to support
    \[
        y^2_j \geq |y^1_1| - c(I)
    \]
    
    There are $J$ inequalities above, and they all hold for any constant $c(I)$ of the type $c(I) = \sum_{i \neq 1,2} y_{j(i)}^i$, $j(i) \in J$ for every $i$. How many constants $c(I)$ of this type are there? We need to allocate $I-2$ sets to $N$ agents, giving $\sum_{m = 1}^{min(I-2,N)} S(I-2,m) \cdot P(N,m)$ constants.

    In particular, for every $j \in J$
    \[
        \sum_{j \in J} f^2_j y^2_j \geq |y^1_1| - c(I-1) - \textcolor{red}{y^{I}_j}
    \]

    where the inequality holds for any constant of type $c(I-1) = \sum_{i \neq 1,2,I} y_{j(i)}^i$, $j(i) \in J$ for every $i$. Similarly to the previous paragraph, there are $\sum_{m = 1}^{min(I-3,N)} S(I-3,m) \cdot P(N,m)$ constants.

    Therefore 
    \[
    \sum_{j \in J} f^2_j y^2_j \geq |y^1_1| - c(I-1) - \sum_{j \in J} f^I_j y_j^I 
    \]

    Iterating, for every $j \in J$

    \[
    \sum_{j \in J} f^2_j y^2_j \geq |y^1_1| - c(I-2) - \textcolor{red}{y_j^{I-1}} - \sum_{j \in J} f^I_j y_j^I 
    \]

    for $c(I-2) = \sum_{i \neq 1,2,I,I-1} y_{j(i)}^i$, $j(i) \in J$ for every $i$.

    Therefore
     \[
    \sum_{j \in J} f^2_j y^2_j \geq |y^1_1| - c(I-2) - \Big(\sum_{j \in J} f^{I-1}_j y_j^{I-1} + \sum_{j \in J} f^I_j y_j^I \Big)
    \]

    Proceeding iteratively, I establish the desired result given that there are enough inequalities to support the claims above. 
    \begin{lem}
        Let $\xi$ be a natural number such that $\xi < I$. For every such $\xi$
        \[
        \sum_{m = 1}^{min(I-\xi,N)} S(I-\xi,m) \cdot P(N,m) = N \times \sum_{m = 1}^{min(I-(\xi + 1),N)} S(I-(\xi + 1),m) \cdot P(N,m)
        \]
    \end{lem}

    \begin{proof}
        Using the recurrence relation of the Stirling numbers of the second kind,
        \[
            S(I-\xi,m) = mS(I - (\xi+1),m) + S(I-(\xi+1),m-1)
        \]
        I get
        \[
        \sum_{m = 1}^{min(I-\xi,N)} S(I-\xi,m) \cdot P(N,m) = \sum_{m = 1}^{min(I-\xi,N)} m S(I-\xi,m) \cdot P(N,m) + 
        \]
        \[
        + \sum_{m = 1}^{min(I-\xi,N)} S(I-(\xi+1),m-1) \cdot P(N,m)
        \]
        Note that $S(n,m) = 0$ where $m > n$, so for all $m > I-(\xi+1)$, $S(I - (\xi+1),m) = 0$. Since $N$ is a natural number, $\sum_{m = 1}^{min(I-\xi,N)} m S(I-(\xi+1),m) \cdot P(N,m) = \sum_{m = 1}^{min(I-(\xi+1),N)} m S(I-(\xi+1),m) \cdot P(N,m)$. 
        For the second term in the sum on the right-hand side, re-labeling produces $\sum_{m = 0}^{min(I-\xi,N)-1} S(I-(\xi+1),m) \cdot P(N,m+1)$. $S(n,0) = 0$ for any $n > 0$, and $P(N,m+1) = (N-m)P(N,m)$. Summing the two terms again produces
        \[
        \sum_{m = 1}^{min(I-(\xi+1),N)} m S(I-(\xi+1),m) \cdot P(N,m) + \sum_{m = 1}^{min(I-\xi,N)-1} S(I-(\xi+1),m) \cdot P(N,m) (N-m) = 
        \]
        \[
        = N \times \sum_{m = 1}^{min(I-(\xi + 1),N)} S(I-(\xi + 1),m) \cdot P(N,m)
        \]
        since if $N \geq 1-\xi$, $min(1-\xi,N) -1 = I -(\xi +1)$, and if $N < 1-\xi$, $min(I - (\xi+1),N) = min(I-\xi,N) = N$.
    \end{proof}

    The claim above establishes the fact that there are just enough inequalities. Every step in the iterative process uses all inequalities from the previous step, and supports the idea used in the construction of the matrix $P$, that is, allocating subsets of the original partition to agents in all possible ways is necessary to apply Farkas' lemma.
\end{proof}

The last claim to prove is $\sum_{m \in M} \sum_{k(m) \in K(m)} a^m_{k(m)} = 1$. To this end, I show that for any $i$, $\sum_{m \in M} \sum_{k(m) \in K(m)} a^m_{k(m)} = \sum_{j \in J} f^i_j$. 
Fix $i$. By construction $\sum_{k,m(k)}P_{(ji),(m,k(m))}a^m_{k(m)} = f^i_j$. Observe that the 1s in a row $(j,i)$ reflect the partitions in which agent $j$ gets $E_i$. For every column, every original subset $E_i$, if agent $k$ gets $E_i$, then no other agent is allocated $E_i$. Thus, every $a^m_{k(m)}$ is present at most once in the sum $\sum_{j \in J} \sum_{k,m(k)}P_{(ji),(m,k(m))}a^m_{k(m)}$. On the other hand, since every $m,k(m)$ assigns an agent to all original subsets, including $E_i$, all $a^m_{k(m)}$ are featured in $\sum_{j \in J} \sum_{k,m(k)}P_{(ji),(m,k(m))}a^m_{k(m)}$.
\end{proof}


\end{document}